\def\dOi{13(4:23)2017}
\newcommand{\HACint}{\ensuremath{\usftext{HAC}_\intern}}
\DeclareMathOperator{\st}{st} 
\newcommand{\intern}{\ensuremath{{\usftext{int}}}}
\newcommand{\HAC}{\ensuremath{{\usftext{HAC}}}}
\newcommand{\R}{\ensuremath{\usftext{R}}}
\newcommand{\LLPO}{\ensuremath{\usftext{LLPO}^{\st}}}
\newcommand{\T}{\ensuremath{\mathcal{T}}}
\newcommand{\usftext}[1]{\textsf{\upshape #1}}
\newcommand{\QFAC}{\ensuremath{{\usftext{QF-AC}}}} 
\newcommand{\ACA}{\ensuremath{\usftext{ACA}}} %
\newcommand{\RCA}{\ensuremath{\usftext{RCA}}} %
\newcommand{\ZFC}{\ensuremath{\usftext{ZFC}}}
\newcommand{\IST}{\ensuremath{\usftext{IST}}}
\newcommand{\WKL}{\ensuremath{\usftext{WKL}}}
\newcommand{\tup}{\underline} 
\def\bdefi{\begin{defi}\rm}
\def\edefi{\end{defi}}
\def\bnota{\begin{nota}\rm}
\def\enota{\end{nota}}
\def\brem{\begin{rem}\rm}
\def\erem{\end{rem}}
\def\STP{\textup{STP}}
\def\RCA{\textup{RCA}}
\def\WKL{\textup{WKL}}
\def\HAC{\textup{HAC}}
\def\bye{\end{document}}
\def\P{{\mathcal  P}}
\def\N{{\mathbb  N}}
\def\Q{{\mathbb  Q}}
\def\R{{\mathbb  R}}
\def\LLPO{{\textup{LLPO}}}
\def\R{{\mathbb{R}}}
\def\({\textup{(}}
\def\){\textup{)}}
\def\st{\textup{st}}
\def\asa{\leftrightarrow}
\def\di{\rightarrow}
\def\ACA{\textup{ACA}}
\newbox\gnBoxA
\newdimen\gnCornerHgt
\newdimen\gnArgHgt
\def\bdefi{\begin{defi}\rm}
\def\edefi{\end{defi}}
\def\bnota{\begin{nota}\rm}
\def\enota{\end{nota}}
\def\brem{\begin{rem}\rm}
\def\erem{\end{rem}}
\def\FIVE{\Pi_{1}^{1}\text{-CA}_{0}}
\def\ATR{\textup{ATR}}
\def\STP{\textup{STP}}
\def\INT{\textup{INT}}
\def\RCA{\textup{RCA}}
\def\RCAo{\textup{RCA}_{0}^{\omega}}
\def\RCAO{\textup{RCA}_{0}^{\Omega}}
\def\WKL{\textup{WKL}}
\def\IVT{\textup{IVT}}
\def\UWKL{\textup{UWKL}}
\def\WWKL{\textup{WWKL}}
\def\UWWKL{\textup{UWWKL}}
\def\bye{\end{document}}
\def\P{{\mathcal  P}}
\def\N{{\mathbb  N}}
\def\Q{{\mathbb  Q}}
\def\R{{\mathbb  R}}
\def\MUC{\textup{MUC}}
\def\LLPO{{\textup{LLPO}}}
\def\R{{\mathbb{R}}}
\def\({\textup{(}}
\def\){\textup{)}}
\def\asa{\leftrightarrow}
\def\di{\rightarrow}
\def\ACA{\textup{ACA}}
\def\paai{\Pi_{1}^{0}\textup{-}\usftext{TRANS}}
\newcommand\be{\begin{equation}}
\newcommand\ee{\end{equation}}
\def\bdefi{\begin{defi}\rm}
\def\edefi{\end{defi}}
\def\bnota{\begin{nota}\rm}
\def\enota{\end{nota}}
\def\brem{\begin{rem}\rm}
\def\erem{\end{rem}}
\def\FIVE{\Pi_{1}^{1}\text{-\textsf{CA}}_{0}}
\def\ATR{\textup{\textsf{ATR}}}
\def\STP{\textup{\textsf{STP}}}
\def\H{\textup{\textsf{H}}}
\def\RCA{\textup{\textsf{RCA}}}
\def\RCAo{\textup{\textsf{RCA}}_{0}^{\omega}}
\def\RCAO{\textup{\textsf{RCA}}_{0}^{\Lambda}}
\def\ns{\textup{\textsf{ns}}}
\def\WKL{\textup{\textsf{WKL}}}
\def\IVT{\textup{IVT}}
\def\IVT{\textup{\textsf{IVT}}}
\def\UWKL{\textup{\textsf{UWKL}}}
\def\WWKL{\textup{\textsf{WWKL}}}
\def\UWWKL{\textup{\textsf{UWWKL}}}
\def\T{\mathcal{T}}
\def\bye{\end{document}}
\def\P{\textup{\textsf{P}}}
\def\N{{\mathbb  N}}
\def\Q{{\mathbb  Q}}
\def\R{{\mathbb  R}}
\def\BIN{\textup{\textsf{BIN}}}
\def\UBIN{\textup{\textsf{UBIN}}}
\def\UDQ{\textup{\textsf{UDQ}}}
\def\MUC{\textup{\textsf{MUC}}}
\def\LLPO{{\textup{\textsf{LLPO}}}}
\def\R{{\mathbb{R}}}
\def\({\textup{(}}
\def\){\textup{)}}
\def\st{\textup{st}}
\def\asa{\leftrightarrow}
\def\di{\rightarrow}
\def\ACA{\textup{\textsf{ACA}}}
\def\paai{\Pi_{1}^{0}\textup{-\textsf{TRANS}}}
\def\QFAC{\textup{\textsf{QF-AC}}}
\def\SCF{\textup{\textsf{SCF}}}
\def\LMP{\textup{\textsf{LMP}}}
\def\MU{\textup{\textsf{MU}}}
\def\HAC{\textup{\textsf{HAC}}}
\def\INT{\textup{\textsf{int}}}
\def\UIVT{\textup{\textsf{UIVT}}}
\numberwithin{equation}{section}
\numberwithin{thm}{section}
\begin{document}
\title[Grilliot's trick in Nonstandard Analysis]{Grilliot's trick in Nonstandard Analysis}
\author[S.~Sanders]{Sam Sanders}
\thanks{This research was supported by the following funding bodies: FWO Flanders, the John Templeton Foundation, the Alexander von Humboldt Foundation, and the Japan Society for the Promotion of Science.}
\address{Munich Center for Mathematical Philosophy, LMU Munich, Germany \& Department of Mathematics, Ghent University}
\email{sasander@me.com}
\subjclass{Theory of computation: Computability \& Proof theory}

\begin{abstract}
Recently, Dag Normann and the author have established a connection between \emph{higher-order computability theory} and \emph{Nonstandard Analysis}; new results in both fields are obtained by exploiting this connection (\cite{dagsam}).
Now, on the side of computability theory, the technique known as \emph{Grilliot's trick} constitutes a template for explicitly defining the Turing jump functional $(\exists^{2})$ in terms of a given effectively discontinuous type two functional (\cite{grilling}).  
In light of the aforementioned connection, it is a natural question what corresponds to Grilliot's trick in Nonstandard Analysis?
In this paper, we discuss the \emph{nonstandard extensionality trick}: a technique similar to Grilliot's trick {in Nonstandard Analysis}.  This nonstandard trick proceeds by deriving from the existence of certain \emph{nonstandard discontinuous} functionals, the \emph{Transfer} principle from Nonstandard analysis limited to $\Pi_{1}^{0}$-formulas; from this (generally ineffective) implication, we obtain an effective implication expressing the Turing jump functional in terms of a discontinuous functional (and no longer involving Nonstandard Analysis).  The advantage of our nonstandard approach is that \emph{one obtains effective content without paying attention to effective content}.  We also discuss a new class of functionals which fall outside the established categories.  
These functionals derive from the \emph{Standard Part} axiom of Nonstandard Analysis.   
\end{abstract}


\maketitle

\section{Introduction}\label{intro}
Recently, Dag Normann and the author have established a connection between \emph{higher-order computability theory} and \emph{Nonstandard Analysis} (\cite{dagsam}).  
In the latter, they investigate the complexity of functionals connected to the \emph{Heine-Borel compactness} of Cantor space.   Surprisingly, this complexity turns out to be 
intimately connected to the \emph{nonstandard compactness} of Cantor space as given by \emph{Robinson's theorem} (See \cite{loeb1}*{p.\ 42}) in Nonstandard Analysis.
In fact, the results in \cite{dagsam} are `holistic' in nature in that theorems in computability theory give rise to theorems in Nonstandard Analysis, \emph{and vice versa}.  
We discuss these results in Section \ref{knowledge} as they serve as the motivation for this paper.  
   
\medskip

In light of the aforementioned connection, it is a natural question which notions from higher-order computability theory have elegant analogues in Nonstandard Analysis, \emph{and vice versa}.
This paper explores one particular case of this question, namely for the technique know as \emph{Grilliot's trick}, introduced in \cite{grilling}.  
The latter `trick' actually constitutes a template for explicitly defining the Turing jump functional $(\exists^{2})$ in terms of a given effectively discontinuous type two functional.  
Below, we introduce the \emph{nonstandard extensionality trick}, which is a technique similar to {Grilliot's trick} {in Nonstandard Analysis}.  
In this way, we study a new \emph{computational} aspect of Nonstandard Analysis pertaining to {Reverse Mathematics} (RM), in line with the results in \cites{samzoo,samzooII, samGH, sambon}.
We refer to \cites{simpson2, simpson1} for an overview to RM.   
We shall make use of \emph{internal set theory}, i.e.\ Nelson's axiomatic Nonstandard Analysis (\cite{wownelly}).  
We introduce internal set theory and its fragments from \cite{brie} in Section~\ref{WIST}.  

\medskip

The nonstandard extensionality trick sums up as: From the existence of \emph{nonstandard discontinuous} functionals, the \emph{Transfer} principle from Nonstandard analysis (See Section~\ref{IST}) limited to $\Pi_{1}^{0}$-formulas is derived; from this (generally ineffective) implication, we obtain an effective implication expressing the Turing jump functional in terms of a discontinuous functional (and no longer involving Nonstandard Analysis).  Essential to obtaining this effective implication is the `term extraction theorem' in Theorem \ref{consresult}, based on \cite{brie}.  
We shall apply the nonstandard extensionality trick to \emph{binary expansion}, the \emph{intermediate value theorem}, the \emph{Weierstra\ss~maximum theorem}, and \emph{weak weak K\"onig's lemma} in Section \ref{main}. 

\medskip

Now, combining the aforementioned results with similar results from \cites{samzoo, samzooII} regarding the RM zoo from \cite{damirzoo}, one gets the idea that the 
higher-order landscape is not very rich and similar to the second-order framework.  
To counter this view, we discuss a new class of functionals in Section~\ref{knowledge} which do not fit the existing categories of RM.  
These functionals are inspired by the \emph{Standard Part} axiom of Nonstandard Analysis.  

\medskip

Finally, we hasten to point out that there are well-established techniques for obtaining effective content from classical mathematics, the most prominent one being the \emph{proof mining} program (\cite{kohlenbach3}).  
In particular, the effective results in this paper could be or have been obtained in this way.  
What is surprising about the results in this paper (in our opinion) is the emergence of effective content (with relative ease) from Nonstandard Analysis \emph{despite} claims 
that the latter is somehow fundamentally non-constructive by e.g.\ Bishop and Connes (See \cite{samsynt} for a detailed discussion of the Bishop-Connes critique).

\section{Internal set theory and fragments}\label{WIST}
In this section, we sketch \emph{internal set theory}, Nelson's \emph{syntactic} approach to Nonstandard Analysis, first introduced in \cite{wownelly}, and its fragments from \cite{brie}.  An in-depth and completely elementary introduction to the constructive content of Nonstandard Analysis is \cite{SB}.      
\subsection{Introducing internal set theory}\label{IST}
Nelson's system $\IST$ of internal set theory is defined as follows:  The language of $\IST$ consist of the language of $\ZFC$, the `usual' foundations of mathematics, plus a new predicate `st($x$)', read as `$x$ is standard'.  
The new quantifiers $(\forall^{\st}x)(\dots)$ and $(\exists^{\st}y)(\dots)$ are short for $(\forall x)(\st(x)\di \dots)$ and $(\exists y)(\st(y)\wedge \dots)$.  
A formula of $\IST$ is called \emph{internal} if it does not involve `st', and \emph{external} otherwise.   

\medskip

The system $\IST$ is the internal system $\ZFC$ plus the
following\footnote{The `fin' in \textsf{(I)} means that $x$ is finite,
  i.e.\ its number of elements are bounded by a natural number.} three
external axioms \emph{Idealisation}, \emph{Standard Part}, and
\emph{Transfer} which govern the predicate `st'.

\begin{itemize}[label=\textsf{(S)}]
\item[\textsf{(I)}] $(\forall^{\st~\textup{fin}}x)(\exists y)(\forall z\in x)\varphi(z,y)\di (\exists y)(\forall^{\st}x)\varphi(x,y)$, for internal $\varphi$.  
\item[\textsf{(S)}] $(\forall^{\st} x)(\exists^{\st}y)(\forall^{\st}z)\big((z\in x\wedge \varphi(z))\asa z\in y\big)$, for any formula $\varphi$.
\item[\textsf{(T)}] $(\forall^{\st}t)\big[(\forall^{\st}x)\varphi(x, t)\di (\forall x)\varphi(x, t)\big]$, for internal $\varphi$ and $t, x$ the only free variables.  
\end{itemize}
Nelson proves in \cite{wownelly} that \textsf{IST} is a conservative extension of \textsf{ZFC}, i.e.\ $\ZFC$ and $\IST$ prove the same (internal) sentences.    
Various fragments of $\IST$ have been studied previously, and we shall make essential use of the system $\P$, a fragment of $\IST$ based on Peano arithmetic, introduced in Section~\ref{PIPI}.  
The system $\P$ was first introduced in \cite{brie} and is exceptional in that it has a `term extraction procedure' \emph{with a very wide scope}.  We discuss this aspect of $\P$ in more detail in Remark~\ref{firliborn}.     

\subsection{The classical system $\P$}\label{PIPI}
In this section, we introduce the classical system $\P$ which is a conservative extension of Peano arithmetic by Theorem \ref{consresult}.  
We refer to \cite{kohlenbach3}*{\S3.3} for the detailed definition of the rather mainstream system $\textsf{E-PA}^{\omega}$, i.e.\ \emph{Peano arithmetic in all finite types with the axiom of extensionality}. 
The system $\P$ consist of the following axioms, starting with the basic ones.  
\bdefi\label{debs}[Basic axioms of $\P$]
\begin{enumerate}
\item The system \textsf{E-PA}$^{\omega*}$ is the definitional extension of \textsf{E-PA}$^{\omega}$ with types for finite sequences as in \cite{brie}*{\S2}. 
\item The set $\T^{*}$ is the collection of all the constants in the language of $\textsf{E-PA}^{\omega*}$.  
\item The external induction axiom \textsf{IA}$^{\st}$ is  
\be\tag{\textsf{IA}$^{\st}$}
\Phi(0)\wedge(\forall^{\st}n^{0})(\Phi(n) \di\Phi(n+1))\di(\forall^{\st}n^{0})\Phi(n).     
\ee
\item\label{krafoi} The system $ \textsf{E-PA}^{\omega*}_{\st} $ is defined as $ \textsf{E-PA}^{\omega{*}} + \T^{*}_{\st} + \textsf{IA}^{\st}$, where $\T^{*}_{\st}$
consists of the following basic axiom schemas.
\begin{enumerate}
\item The schema\footnote{The language of $\textsf{E-PA}_{\st}^{\omega*}$ contains a symbol $\st_{\sigma}$ for each finite type $\sigma$, but the subscript is always omitted.  Hence $\T^{*}_{\st}$ is an \emph{axiom schema} and not an axiom.\label{omit}} $\st(x)\wedge x=y\di\st(y)$. \label{komit}
\item The schema providing for each closed term $t\in \T^{*}$ the axiom $\st(t)$.
\item The schema $\st(f)\wedge \st(x)\di \st(f(x))$.
\end{enumerate}
\end{enumerate}
Secondly, Nelson's axiom \emph{Standard part} is weakened in \cite{brie} to $\HAC_{\INT}$:
\be\tag{$\HAC_{\INT}$}
(\forall^{\st}x^{\rho})(\exists^{\st}y^{\tau})\varphi(x, y)\di (\exists^{\st}F^{\rho\di \tau^{*}})(\forall^{\st}x^{\rho})(\exists y^{\tau}\in F(x))\varphi(x,y),
\ee
where $\varphi$ is any internal formula and $\tau^{*}$ is the type of finite sequences of objects of type $\tau$.  Note that $F$ only provides a \emph{finite sequence} of witnesses to $(\exists^{\st}y)$, explaining the name \emph{Herbrandized Axiom of Choice} for $\HAC_{\INT}$.

\medskip
      
Thirdly,  Nelson's axiom idealisation \textsf{I} appears in \cite{brie} as follows:  
\be\tag{\textsf{I}}
(\forall^{\st} x^{\sigma^{*}})(\exists y^{\tau} )(\forall z^{\sigma}\in x)\varphi(z,y)\di (\exists y^{\tau})(\forall^{\st} x^{\sigma})\varphi(x,y), 
\ee
where $\varphi$ is internal and $\sigma^{*}$ is the type of finite sequences of objects of type $\sigma$.
\edefi
For $\P\equiv \textsf{E-PA}^{\omega*}_{\st} +\HAC_{\INT} +\textsf{I}$, we have the following `term extraction theorem', which is not explicitly formulated or proved in \cite{brie}. A proof may be found in \cites{samzoo, sambon}.     
\begin{thm}[Term extraction]\label{consresult}
Let $\varphi$ be an internal formula and let $\Delta_{\intern}$ be a collection of internal formulas.  If we have:
\be\label{antecedn}
\P + \Delta_{\intern} \vdash (\forall^{\st}\underline{x})(\exists^{\st}\underline{y})\varphi(\underline{x}, \underline{y}, \underline{a})
\ee
then one can extract from the proof a sequence of closed terms $t$ in $\mathcal{T}^{*}$ such that
\be\label{consequalty}
\textup{\textsf{E-PA}}^{\omega*} + \Delta_{\intern} \vdash ( \forall \tup x) (\exists \tup y\in \tup t(\tup x)) \varphi(\tup x,\tup y, \tup a).
\ee
\end{thm}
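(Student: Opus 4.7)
The plan is to deduce Theorem \ref{consresult} from the soundness theorem for the Herbrandized functional interpretation of $\P$ given in \cite{brie}. Recall that van den Berg--Briseid--Safarik define a translation $\Phi \mapsto \Phi^{\Shb_{\st}} \equiv (\exists^{\st}\underline{x})(\forall^{\st}\underline{y})\,\varphi_{\Shb_{\st}}(\underline{x},\underline{y})$ (with $\varphi_{\Shb_{\st}}$ internal) together with a soundness theorem stating that if $\P \vdash \Phi(\underline{a})$, then one can extract closed terms $\underline{t} \in \T^{*}$ such that $\textup{\textsf{E-PA}}^{\omega*} \vdash (\forall \underline{y})\,\varphi_{\Shb_{\st}}(\underline{t}(\underline{a}),\underline{y},\underline{a})$. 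This is the central machinery and it is what powers all applications of $\P$ to effective content; my job is just to specialise it to the shape of formulas in \eqref{antecedn}.

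First, I would verify how $\Shb_{\st}$ acts on the formulas involved. For any internal formula $\psi$, we have $\psi^{\Shb_{\st}} \equiv \psi$, so each axiom in $\Delta_{\intern}$ is its own interpretation and can be carried along as an assumption in $\textup{\textsf{E-PA}}^{\omega*}$. For the formula $\Phi \equiv (\forall^{\st}\underline{x})(\exists^{\st}\underline{y})\,\varphi(\underline{x},\underline{y},\underline{a})$ in the conclusion of \eqref{antecedn}, the clauses of $\Shb_{\st}$ for $\forall^{\st}$ and $\exists^{\st}$ produce, after a routine computation, the interpretation $(\exists^{\st}\underline{F})(\forall^{\st}\underline{x})(\exists \underline{y} \in \underline{F}(\underline{x}))\,\varphi(\underline{x},\underline{y},\underline{a})$; here the finite-sequence types provided by $\textup{\textsf{E-PA}}^{\omega*}$ accommodate the Herbrandization, i.e.\ the fact that $\underline{F}$ outputs a \emph{tuple} of candidate witnesses rather than a single one.

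Next, applying the soundness theorem (relative to $\Delta_{\intern}$, which is legal since these axioms are internal and hence invariant under $\Shb_{\st}$) to the hypothesis $\P + \Delta_{\intern} \vdash \Phi$ yields closed terms $\underline{t} \in \T^{*}$ such that
\[
\textup{\textsf{E-PA}}^{\omega*} + \Delta_{\intern} \vdash (\forall \underline{x})(\exists \underline{y} \in \underline{t}(\underline{x}))\,\varphi(\underline{x},\underline{y},\underline{a}),
\]
where the outer standardness quantifier on $\underline{x}$ has been dropped because the resulting internal conclusion is proved in a system without the predicate $\st$. This is precisely \eqref{consequalty}.

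The only real obstacle is a bookkeeping one: one must check that the clauses of $\Shb_{\st}$ for the two nonstandard quantifier blocks indeed compose to give the Herbrandized form above, and that the parameters $\underline{a}$ (which are standard but free) propagate correctly through the interpretation so that the extracted terms depend only on them. Both facts are implicit in \cite{brie} and spelled out in \cite{samzoo} and \cite{sambon}; no new ideas are required beyond unwinding the definition of $\Shb_{\st}$ on the specific quantifier prefix $\forall^{\st}\exists^{\st}$ appearing in \eqref{antecedn}.
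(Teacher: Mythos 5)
Your proposal is correct and follows essentially the same route as the paper: both invoke the interpretation $S_{\st}$ of \cite{brie}*{Def.\ 7.1} (your $\Shb_{\st}$), compute its action on a formula of the shape $(\forall^{\st}\underline{x})(\exists^{\st}\underline{y})\varphi$ with $\varphi$ internal, and then apply the soundness theorem \cite{brie}*{Theorem 7.7}. If anything, you are slightly more explicit than the paper, which compresses the verification into the (somewhat loosely stated) claim ``$\Phi^{S_{\st}}\equiv\Phi$'' rather than spelling out the Herbrandized form $(\exists^{\st}\underline{F})(\forall^{\st}\underline{x})(\exists \underline{y}\in\underline{F}(\underline{x}))\varphi$ that you correctly identify as the source of the bounded existential in \eqref{consequalty}.
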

\begin{proof}
The proof of the theorem in a nutshell: A proof interpretation $S_{\st}$ is defined in \cite{brie}*{Def.\ 7.1}; a tedious but straightforward verification using the clauses (i)-(v) in \cite{brie}*{Def.\ 7.1} establishes that $\Phi(\underline{a})^{S_{\st}}\equiv \Phi(\underline{a})$ for $\Phi(\underline{a})\equiv (\forall^{\st}\underline{x})(\exists^{\st}\underline{y})\varphi(\underline{x}, \underline{y}, \underline{a})$ and $\varphi$ internal.  The theorem now follows immediately from \cite{brie}*{Theorem 7.7}.  
\end{proof}
The term $t$ in \eqref{consequalty} is \emph{primitive recursive} in the sense of G\"odel's system ${T}$.  The latter was introduced in \cite{godel3}, and is also discussed in \cite{kohlenbach3}*{\S3}.  For the rest of this paper, a `normal form' will refer to a formula as in \eqref{antecedn}, i.e.\ of the form $ (\forall^{\st} \tup x )(\exists^{\st} \tup y) \varphi(\tup x,\tup y, \tup a)$ for internal $\varphi$.

\medskip

As expected, the previous theorem does not really depend on the presence of full Peano arithmetic.  
Indeed, let \textsf{E-PRA}$^{\omega}$ be the system defined in \cite{kohlenbach2}*{\S2} and let \textsf{E-PRA}$^{\omega*}$ 
be its definitional extension with types for finite sequences as in \cite{brie}*{\S2}.  We permit ourselves a slight abuse of notation by not distinguishing between Kohlenbach's $\RCAo\equiv \textup{\textsf{E-PRA}}^{\omega}+\QFAC^{1,0}$ (See \cite{kohlenbach2}*{\S2}) and $\textup{\textsf{E-PRA}}^{\omega*}+\QFAC^{1,0}$.  
\begin{cor}\label{consresultcor2}
The previous theorem and corollary go through for $\P$ and $\textup{\textsf{E-PA}}^{\omega*}$ replaced by $\RCAO\equiv \textsf{\textup{E-PRA}}^{\omega*}+\T_{\st}^{*} +\HAC_{\INT} +\textsf{\textup{I}}+\QFAC^{1,0}$ and $\RCAo$.  
\end{cor}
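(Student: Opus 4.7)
The plan is to repeat the proof of Theorem \ref{consresult} verbatim, with $\textsf{E-PA}^{\omega*}$ swapped for $\textsf{E-PRA}^{\omega*}$ throughout, and to verify that nothing in the argument actually used the strength of full Peano arithmetic. Recall that the proof of Theorem \ref{consresult} rests entirely on the interpretation $S_{\st}$ of \cite{brie}*{Def.\ 7.1} together with the soundness result \cite{brie}*{Theorem 7.7}. Both are defined schematically by recursion on the logical complexity of formulas, so neither is sensitive to whether the underlying arithmetic is primitive recursive or full Peano: all that matters is that each axiom of the nonstandard system be interpreted into something provable in the chosen internal base theory.

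Concretely, I would proceed schema by schema. First, the internal axioms of $\textsf{E-PRA}^{\omega*}+\QFAC^{1,0}$ are left essentially unchanged by $S_{\st}$ (this is the observation that internal formulas are fixed points of the interpretation up to provable equivalence), so they are trivially interpretable in $\RCAo$. Secondly, the axioms comprising $\T^{*}_{\st}$ and the axioms $\HAC_{\INT}$ and $\textsf{I}$ are treated exactly as in \cite{brie}*{\S7}; these verifications are purely type-theoretic and do not invoke induction. The only genuinely delicate point is the external induction schema $\textsf{IA}^{\st}$: since $\RCAo$ replaces full first-order induction by the quantifier-free (equivalently $\Sigma_{1}^{0}$) schema, one has to check that the $S_{\st}$-translate of each instance of $\textsf{IA}^{\st}$ falls within the induction strength of $\RCAo$. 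Because the interpretation reduces the $\st$-quantifiers on the outside to bounded quantifiers over a term, the instance of internal induction actually needed is quantifier-free (with parameters of higher type), hence available in $\RCAo$ together with $\QFAC^{1,0}$.

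Once those schema-by-schema checks are in place, the term extraction step itself is literally the same: from a derivation of a normal form $(\forall^{\st}\underline{x})(\exists^{\st}\underline{y})\varphi(\underline{x},\underline{y},\underline{a})$ in $\RCAO$, the interpretation produces closed terms $\underline{t}$ in the language of $\textsf{E-PRA}^{\omega*}$ witnessing the conclusion \eqref{consequalty} in $\RCAo$. Note that the extracted $\underline{t}$ are now built only from the primitives of $\textsf{E-PRA}^{\omega*}$, and so are primitive recursive in the sense of Kleene's $T_{0}$ rather than of G\"odel's full $T$ — which is precisely the expected strengthening of the quantitative content in the PRA-based setting. I expect the main obstacle to be the bookkeeping around $\textsf{IA}^{\st}$: one must confirm that the auxiliary functionals introduced by the interpretation (from $\HAC_{\INT}$ in particular) remain definable by primitive recursive functionals of finite type, so that the whole chain of interpreted axioms stays inside the weaker base theory; everything else transfers mechanically from the proof of Theorem \ref{consresult}, a detailed version of which appears in \cites{samzoo, sambon}.
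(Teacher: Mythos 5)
Your approach is essentially the paper's: verify that the soundness proof underlying \cite{brie}*{Theorem~7.7} does not exploit the full strength of Peano arithmetic and hence carries over to the primitive-recursive base theory. The paper's own proof is considerably terser, reducing everything to the observation that the argument works for any fragment containing $\textsf{EFA}$ (exponentiation sufficing to manipulate the finite sequences that the interpretation produces), whereas you single out the external induction schema $\textsf{IA}^{\st}$ and the functionals coming from $\HAC_{\INT}$ as the delicate points --- compatible readings of the same underlying fact.
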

\begin{proof}
The proof of \cite{brie}*{Theorem 7.7} goes through for any fragment of \textsf{E-PA}$^{\omega{*}}$ which includes \textsf{EFA}, sometimes also called $\textsf{I}\Delta_{0}+\textsf{EXP}$.  
In particular, the exponential function is (all what is) required to `easily' manipulate finite sequences.    
\end{proof}
Next, we discuss the vast scope of the term extraction result in Theorem~\ref{consresult}.  
\begin{rem}[The scope of term extraction]\label{firliborn}\rm
First of all, there are examples of classically provable sentences (See \cite{kohlenbach3}*{\S2.2}) with \emph{only two} quantifier alternations from which no computational information can be extracted.  
By contrast, it is shown in \cite{sambon} that the scope of Theorem~\ref{consresult} encompasses all theorems of `pure' Nonstandard Analysis, where `pure' means that only \emph{nonstandard} definitions (of continuity, compactness, differentiability, Riemann integration, et cetera) are used.  Indeed, it is easy to prove in $\P$ (or $\RCAO$) that these nonstandard definitions have equivalent normal forms, and that an implication between two normal forms is again equivalent to a normal form.  In other words, the scope of the term extraction result in Theorem~\ref{consresult} is vast, as explored in \cites{sambon, samzoo, samzooII, samGH}.  
\end{rem}
Finally, we note that a `constructive' version of $\P$ is introduced in \cite{brie}*{\S5}.  
In particular, the system $\H$ is a conservative extension of Heyting arithmetic $\textsf{E-HA}^{\omega}$ for the latter's language, and satisfies 
a term extraction theorem similar to Theorem~\ref{consresult} (See \cite{brie}*{Theorem~5.9}).  We briefly discuss $\H$ in Remark \ref{reasonsofspace}.  

\subsection{Notations}
We mostly use the notations from \cite{brie}, some of which we repeat.  

\medskip

First of all, the following notations were sketched in Section \ref{IST}.  
\begin{rem}[Notations]\label{notawin}\rm
We write $(\forall^{\st}x^{\tau})\Phi(x^{\tau})$ and $(\exists^{\st}x^{\sigma})\Psi(x^{\sigma})$ as short for the formula
$(\forall x^{\tau})\big[\st(x^{\tau})\di \Phi(x^{\tau})\big]$ and $(\exists x^{\sigma})\big[\st(x^{\sigma})\wedge \Psi(x^{\sigma})\big]$.     
We also write $(\forall x^{0}\in \Omega)\Phi(x^{0})$ and $(\exists x^{0}\in \Omega)\Psi(x^{0})$ as short for 
$(\forall x^{0})\big[\neg\st(x^{0})\di \Phi(x^{0})\big]$ and $(\exists x^{0})\big[\neg\st(x^{0})\wedge \Psi(x^{0})\big]$.  
Finally, a formula $A$ is `internal' if it does not involve $\st$.  The formula $A^{\st}$ is defined from $A$ by appending `st' to all quantifiers (except bounded number quantifiers).    
\end{rem}
Secondly, we use the usual notations for rational and real numbers in $\RCAo$ as introduced in \cite{kohlenbach2}*{p.\ 288-289}.  
We repeat some of the latter definitions.  
\begin{defi}[Real numbers and related notions]\label{keepinitreal}\rm~
\begin{enumerate}
\item A (standard) rational $q^{0}$ is a fraction $\pm\frac{m}{n}$ for (standard) $n^{0}>0$ and (standard) $m^{0}$.  We write `$q^{0}\in \Q$' to denote that $q$ is a rational.           
\item A (standard) real number $x$ is a (standard) fast-converging Cauchy sequence $q_{(\cdot)}^{1}$, i.e.\ $(\forall n^{0}, i^{0})(|q_{n}-q_{n+i})|<_{0} \frac{1}{2^{n}})$.  
We use Kohlenbach's `hat function' from \cite{kohlenbach2}*{p.\ 289} to guarantee that every sequence $f^{1}$ is a real.  
\item We write $[x](k):=q_{k}$ for the $k$-th approximation of a real $x^{1}=(q^{1}_{(\cdot)})$.    
\item Two reals $x, y$ represented by $q_{(\cdot)}$ and $r_{(\cdot)}$ are \emph{equal}, denoted $x=_{\R}y$, if $(\forall n>0)(|q_{n}-r_{n}|\leq \frac{1}{2^{n-1}})$. Inequality $x<_{\R}y$ is defined by $(\exists n>0)(q_{n}+ \frac{1}{2^{n-1}}< r_{n})$.         
\item We  write $x\approx y$ if $(\forall^{\st} n>0)(|q_{n}-r_{n}|\leq \frac{1}{2^{n-1}})$ and $x\gg y$ if $x>_{\R}y\wedge x\not\approx y$.  
\item Functions $F:\R\di \R$ mapping reals to reals are represented by functionals $\Phi^{1\di 1}$ mapping equal reals to equal reals, i.e. 
\be\tag{\textsf{RE}}\label{furg}
(\forall x^{1}, y^{1})(x=_{\R}y\di \Phi(x)=_{\R}\Phi(y)).
\ee   
\item Sets of objects of type $\rho$ are denoted $X^{\rho\di 0}, Y^{\rho\di 0}, Z^{\rho\di 0}, \dots$ and are given by their characteristic functions $f^{\rho\di 0}_{X}$, i.e.\ $(\forall x^{\rho})[x\in X\asa f_{X}(x)=_{0}1]$, where $f_{X}^{\rho\di 0}$ is assumed to output zero or one.  
\end{enumerate}
\end{defi}

\noindent Thirdly, we use the usual extensional notion of equality in $\P$.    
\begin{rem}[Equality in $\P$]\label{equ}\rm
Equality between natural numbers `$=_{0}$' is a primitive.  Equality `$=_{\tau}$' for type $\tau$-objects $x,y$ is then defined as follows:
\be\label{aparth}
[x=_{\tau}y] \equiv (\forall z_{1}^{\tau_{1}}\dots z_{k}^{\tau_{k}})[xz_{1}\dots z_{k}=_{0}yz_{1}\dots z_{k}]
\ee
if the type $\tau$ is composed as $\tau\equiv(\tau_{1}\di \dots\di \tau_{k}\di 0)$.
In the spirit of Nonstandard Analysis, we define `approximate equality $\approx_{\tau}$' as follows:
\be\label{aparth2}
[x\approx_{\tau}y] \equiv (\forall^{\st} z_{1}^{\tau_{1}}\dots z_{k}^{\tau_{k}})[xz_{1}\dots z_{k}=_{0}yz_{1}\dots z_{k}]
\ee
with the type $\tau$ as above.  
The system $\P$ includes the \emph{axiom of extensionality}: 
\be\label{EXT}\tag{\textsf{E}}  
(\forall  x^{\rho},y^{\rho}, \varphi^{\rho\di \tau}) \big[x=_{\rho} y \di \varphi(x)=_{\tau}\varphi(y)   \big].
\ee
However, as noted in \cite{brie}*{p.\ 1973}, the so-called axiom of \emph{standard} extensionality \eqref{EXT}$^{\st}$ is not included in $\P$, as this would jeopardise the term extraction property as in Theorem \ref{consresult}.   
Finally, a functional $\Xi^{ 1\di 0}$ is called an \emph{extensionality functional} for $\varphi^{1\di 1}$ if 
\be\label{turki}
(\forall k^{0}, f^{1}, g^{1})\big[ \overline{f}\Xi(f,g, k)=_{0}\overline{g}\Xi(f,g,k) \di \overline{\varphi(f)}k=_{0}\overline{\varphi(g)}k \big],  
\ee
i.e.\ $\Xi$ witnesses \eqref{EXT} for $\varphi$.  
As will become clear in Section \ref{main}, $\eqref{EXT}^{\st}$ is translated to the existence of an extensionality functional when applying Theorem \ref{consresult}.      
\end{rem} 

\section{An analogue of Grilliot's trick in Nonstandard Analysis}\label{main}
In this section, we show how from certain equivalences in Nonstandard Analysis involving a fragment of Nelson's \emph{Transfer}, namely $\paai$, one obtains \emph{effective} RM-equivalences involving $(\exists^{2})$ in Kohlenbach's higher-order RM. 
\be\tag{$\exists^{2}$}
(\exists \varphi^{2})(\forall f^{1})\big[ (\exists n)(f(n)=0)\asa \varphi(f)=0   \big].
\ee
\be\tag{$\paai$}
\qquad\qquad\quad\enspace(\forall^{\st}f^{1})\big[(\forall^{\st}n)f(n)\ne 0 \di (\forall n)f(n)\ne0   \big].
\ee
To this end, we shall make use of a technique from Nonstandard Analysis we call \emph{the nonstandard extensionality trick}, and which is similar to \emph{Grilliot's trick}.  We first introduce some of the above italicised notions in the following section.
\subsection{Preliminaries}
In this section, we introduce the notion of `effective implication' and so-called Grilliot's trick.   
First of all, the notion of `effective implication' is defined as one would expect in $\RCAo$.  
\bdefi[Effective implication]\label{effimp}
An implication $(\exists \Phi)A(\Phi)\di (\exists \Psi)B(\Psi)$ (proved in $\RCA_{0}$) is \emph{effective} if there is a term $t$ (in the language of $\RCAo$) such that additionally $(\forall \Phi)[A(\Phi)\di B(t(\Phi))]$ (proved in $\RCAo$).  
\edefi
The terms obtained using Theorem \ref{consresult} are \emph{primitive recursive} in the sense of G\"odel's system ${T}$, as discussed in Section \ref{PIPI}.  
In light of the elementary nature of an extensionality functional (See Remark~\ref{equ}), we still refer to an implication as `effective', if the term $t$ as in Definition \ref{effimp} involves an extensionality functional.  
Note that $\RCAo$ proves the existence of an extensionality functional thanks to $\QFAC^{1,0}$ while an unbounded search (available in a more general setting than G\"odel's $T$) also yields such a functional.  

\medskip

Secondly, as to methodology, we shall make use of a nonstandard technique, called \emph{the nonstandard extensionality trick} (See Remark \ref{trick}), similar to \emph{Grilliot's trick}.  
Now, the latter trick is in fact an explicit construction to obtain the Turing jump functional $(\exists^{2})$ from a given effectively discontinuous functional 
(See e.g.\ \cite{grilling}, \cite{kohlenbach2}*{Prop.~3.7}, or \cite{kooltje}*{Prop.\ 3.4} for more details).  In our \emph{nonstandard} trick, one obtains $\paai$ from a functional $\Phi^{1\di 1}$ which is \emph{nonstandard} discontinuous, i.e.\ there are
$x_{0}\approx_{1} x_{1}$ such that $ \Phi(x_{0})\not\approx_{1} \Phi(x_{1})$.  By applying term extraction as in Theorem~\ref{consresult}, one then obtains an effective implication involving $(\exists^{2})$.
As we will see, the nonstandard proof involving $\paai$ uses proof by contradiction, i.e.\ \emph{no attempt to obtain effective content is made in the nonstandard proofs}.  

\medskip

Thirdly, in the next sections, we apply the aforementioned nonstandard extensionality trick to \emph{binary expansion}, the \emph{intermediate value theorem}, the \emph{Weierstra\ss~maximum theorem}, and \emph{weak weak K\"onig's lemma}.  We choose these theorems due to their `non-constructive' nature, and as some of the associated uniform versions (sometimes involving sequences) have been studied (\cites{sayo, polahirst, yamayamaharehare}, \cite{simpson2}*{IV.2.12}, \cite{kohlenbach2}*{\S3}).  
It is particularly interesting that we can `recycle' the Brouwerian counterexamples to the intermediate value theorem and Weierstra\ss~maximum theorem (\cite{beeson1}*{I.7}, \cite{mandje2}) to obtain nonstandard equivalences.  

\medskip

Finally, it is a natural question \emph{why} we can obtain computational information from proofs in classical Nonstandard Analysis \emph{at all}.  
Indeed, Bishop and Connes have made rather strong claims regarding the non-constructive nature of Nonstandard Analysis (See\footnote{The third reference is Bishop's review of Keisler's introduction to Nonstandard Analysis \cite{keisler3}.} \cite{kluut}*{p.\ 513}, \cite{bishl}*{p.\ 1}, \cite{kuddd}, \cite{conman2}*{p.\ 6207} and \cite{conman}*{p.\ 26}).  Furthermore, there are examples of classically provable sentences (See \cite{kohlenbach3}*{\S2.2}) with \emph{only two} quantifier alternations from which no computational information can be extracted, and the aforementioned theorems involve a lot more quantifier alternations.  Moreover, our nonstandard proofs make use of `proof by contradiction', i.e.\ no attempt at a `constructive' proof is made.
Nonetheless, in Sections \ref{bincco} to \ref{hivt}, we shall obtain effective equivalences from certain `non-constructive' nonstandard equivalences.  
Following similar results in Section \ref{carmichael}, we offer an explanation why Nonstandard Analysis contains so much computational information.  

\subsection{Binary conversion}\label{bincco}
In this section, we study the principle of \emph{binary conversion}, i.e.\ the statement that every real can be represented in binary as follows:
\be\label{bin}\tag{$\BIN$}\textstyle
(\forall x\in [0,1])(\exists\alpha^{1}\leq_{1}1)(x=_{\R}\sum_{i=1}^{\infty}\frac{\alpha(i)}{2^{i}}).
\ee
Hirst shows in \cite{polahirst} that  $\RCA_{0}$ proves $\BIN$, and that a uniform version of the latter \emph{involving sequences} is equivalent to $\WKL$.  
Furthermore, $\BIN$ is equivalent to $\LLPO$ in \emph{constructive} Reverse Mathematics (\cite{bridges1}*{p.\ 10}), while a finer classification may be found in \cite{bergske}.  
We study a \emph{higher-type} {uniform} version of $\BIN$:
\be\label{ubin}\tag{$\UBIN$}\textstyle
(\exists \Phi:\R\di 1)(\forall x\in [0,1])\big[ \Phi(x)\leq_{1}1\wedge  x=_{\R}\sum_{i=1}^{\infty}\frac{\alpha(i)}{2^{i}}\big].
\ee
We shall first establish a particular nonstandard equivalence involving $\paai$, and a nonstandard version of $\UBIN$.  As a result of applying Corollary~\ref{consresultcor2} to this nonstandard equivalence, we obtain an \emph{effective} equivalence between $\UBIN$ and the following version of arithmetical comprehension.    
\be\tag{$\mu^{2}$}\label{Frak}
(\exists \mu^{2})\big[(\forall f^{1})\big((\exists x^{0})f(x)=0 \di  f(\mu(f))=0 \big)\big].
\ee  
The functional $(\mu^{2})$ is also known as \emph{Feferman's non-constructive mu-operator} (See \cite{avi2}*{\S8.2}), and is equivalent to $(\exists^{2})$ in $\RCAo$ by \cite{kooltje}*{\S3}.  
We denote by $\MU(\mu)$ the formula in square brackets in $(\mu^{2})$.  
We  use the following nonstandard version of $\UBIN$, called $\UBIN^{+}$:
\[
(\exists^{\st} \Phi:\R\di 1)(\forall^{\st} x\in [0,1])\big[\UBIN(\Phi, x)\wedge (\forall^{\st}x, y\in [0,1])(x\approx y \di \Phi(x)\approx_{1}\Phi(y))  \big]. 
\]
where $\UBIN(\Phi, x)$ is the formula in square brackets in $\UBIN$.  Note that the second conjunct of $\UBIN^{+}$ expresses that $\Phi$ is `standard extensional', i.e.\ satisfies extensionality as in \eqref{furg} relative to `st' (and the range is Baire space instead of $\R$).      
\begin{thm}\label{proto7}
From a proof in $\RCAO$ that $\UBIN^{+}\asa \paai$, terms $s,t$ can be extracted such that $\RCAo$ proves:
\be\label{frood7}
(\forall \mu^{2})\big[\textsf{\MU}(\mu)\di \UBIN(s(\mu)) \big] \wedge (\forall \Phi^{1\di 1})\big[ \UBIN(\Phi)\di  \MU(t(\Phi, \Xi))  \big].
\ee
where $\Xi$ is an extensionality functional for $\Phi$ and $\UBIN(\Phi)$ is $(\forall x\in [0,1])\UBIN(\Phi, x)$. 
\end{thm}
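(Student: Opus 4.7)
The plan is to prove the nonstandard equivalence $\UBIN^{+}\leftrightarrow\paai$ in $\RCAO$; both implications then fit the normal-form pattern of Remark~\ref{firliborn}, and Corollary~\ref{consresultcor2} extracts the required terms $s$ and $t$ of \eqref{frood7}.

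For $\UBIN^{+}\to\paai$, I would apply the nonstandard extensionality trick. Let $\Phi$ witness $\UBIN^{+}$ and fix standard $f^{1}$ with $(\forall^{\st}n)\,f(n)\ne 0$. Set $b:=\Phi(1/2)(1)\in\{0,1\}$, a standard bit, and define the standard real
\[
x_{f} \;:=\; \tfrac{1}{2} + (1-2b)\cdot\sum_{n=1}^{\infty}\frac{c_{f}(n)}{2^{n+1}},
\]
where $c_{f}(n)=1$ iff $n$ is the least integer with $f(n)=0$, else $c_{f}(n)=0$. Assume toward a contradiction that $(\exists n)\,f(n)=0$; the least such $n^{*}$ is necessarily nonstandard (else it would violate $(\forall^{\st}n)\,f(n)\ne 0$), so $|x_{f}-\tfrac12|=1/2^{n^{*}+1}\approx 0$ and hence $x_{f}\approx\tfrac12$. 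Standard extensionality of $\Phi$ applied to the standard pair $(x_{f},\tfrac12)$ yields $\Phi(x_{f})\approx_{1}\Phi(\tfrac12)$, so in particular $\Phi(x_{f})(1)=b$. But the sign in $x_{f}$ was rigged the wrong way: if $b=1$ then $x_{f}<_{\R}\tfrac12$, forcing $\Phi(x_{f})(1)=0$; if $b=0$ then $x_{f}>_{\R}\tfrac12$, forcing $\Phi(x_{f})(1)=1$. Either case contradicts $\UBIN(\Phi,x_{f})$, so $(\forall n)\,f(n)\ne 0$, which is $\paai$.

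For $\paai\to\UBIN^{+}$, I would start from the $\RCAo$-provable $\BIN$ (\cite{polahirst}), i.e.\ $(\forall^{\st}x\in[0,1])(\exists^{\st}\alpha\le_{1}1)\,\UBIN(\alpha,x)$, and apply $\HAC_{\INT}$ together with first-element selection to obtain a standard $\Phi$ with $\UBIN(\Phi,x)$ on standard $x$. Standard extensionality of $\Phi$ then comes for free: for standard $x,y$ with $x\approx y$ the standard formula $|[x](n)-[y](n)|\le 1/2^{n-1}$ is $(\forall^{\st}n)$-true, hence by $\paai$ is $(\forall n)$-true, so $x=_{\R}y$ and therefore $\Phi(x)=_{1}\Phi(y)$ by \eqref{EXT}. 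Feeding both implications through Corollary~\ref{consresultcor2} yields \eqref{frood7}: the backward direction produces $s$ with $\MU(\mu)\to\UBIN(s(\mu))$; the forward direction produces $t$ with $\UBIN(\Phi)\to\MU(t(\Phi,\Xi))$, where the extensionality functional $\Xi$ of \eqref{turki} appears because the nonstandard step $x_{f}\approx\tfrac12\Rightarrow\Phi(x_{f})(1)=\Phi(\tfrac12)(1)$ is, after normal-form translation, witnessed by a modulus of extensionality for $\Phi$.

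The principal obstacle is arranging the forward argument so that the bit-level mismatch occurs at the \emph{fixed} standard coordinate $1$ of $\Phi(x_{f})$, rather than at some merely existentially-quantified standard coordinate; otherwise, after term extraction, $t$ would be forced to search unboundedly through the output of $\Phi$. The explicit choice of sign $(1-2b)$ in $x_{f}$ avoids this by tying the contradiction to coordinate $1$, so that $t$ only needs to query $\Xi$ at the single triple $(x_{f},\tfrac12,1)$ and then inspect $\Phi(x_{f})(1)$ to locate a witness to $\mu(f)$. A routine verification using Remark~\ref{firliborn} confirms that the implication between the normal forms of $\UBIN^{+}$ and $\paai$ is itself in normal form, after which Theorem~\ref{consresult} applies as stated to produce $s$ and $t$.
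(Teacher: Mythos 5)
Your forward implication $\UBIN^{+}\di\paai$ is a correct variant of the paper's nonstandard-extensionality trick: the paper straddles $\tfrac12$ symmetrically with $x_{\pm}=\tfrac12\pm\sum_{n}\alpha_{0}(n)/2^{n}$ (where $\alpha_{0}$ is the tail indicator built from the counterexample to $\paai$), while you read off $b=\Phi(\tfrac12)(1)$ and rig a single $x_{f}$ on the opposite side of $\tfrac12$, producing the same bit-level clash at coordinate $1$. Either version feeds equally well into the normal-form manipulation and Corollary~\ref{consresultcor2}, and your worry at the end about ``unbounded search through the output'' is a non-issue, since both proofs already pin the disagreement to coordinate $1$ and term extraction handles bounded witnesses automatically.

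The reverse implication $\paai\di\UBIN^{+}$ has a genuine gap. You write ``the $\RCAo$-provable $\BIN$, i.e.\ $(\forall^{\st}x\in[0,1])(\exists^{\st}\alpha\leq_{1}1)\UBIN(\alpha,x)$'', but that identification is exactly what cannot be granted for free. $\RCAO$ inherits the \emph{internal} $\BIN$ from $\RCAo$, not its st-relativisation; getting a \emph{standard} witness $\alpha$ from a standard $x$ is precisely the Transfer-type content that $\paai$ has to supply and that the proof is obliged to establish, and you use $\paai$ only afterwards (for standard extensionality), never in constructing $\Phi$. The paper closes this by first deriving from $\paai$, via $\HAC_{\INT}$ applied to the normal form \eqref{frux}, a \emph{standard} functional $\xi$ satisfying $[\MU(\xi)]^{\st}$, and then defining $\Phi(x)(n)$ bit-by-bit with $\xi$ deciding each comparison $x\geq_{\R}\tfrac12\big(x+\sum_{i\leq n}\Phi(x)(i)/2^{i}\big)$; only after that construction is $\Phi$ standard and $[\UBIN(\Phi)]^{\st}$ provable, and a final application of $\paai$ and extensionality yields $\UBIN^{+}$. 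Your ``apply $\HAC_{\INT}$ together with first-element selection'' also hides a use of $\paai$: selecting the correct $\alpha$ from the finite list requires deciding a $\Pi^{0}_{1}$ condition, which again needs $\xi$. A secondary inaccuracy: the step ``$x=_{\R}y$ and therefore $\Phi(x)=_{1}\Phi(y)$ by \eqref{EXT}'' does not follow as stated, since \eqref{EXT} concerns $=_{1}$ and not $=_{\R}$; that $\Phi$ respects $=_{\R}$ has to come from the $\xi$-driven construction (the comparisons are $=_{\R}$-invariant), not from the bare extensionality axiom.
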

\begin{proof}
First of all, we prove that $\UBIN^{+}\di \paai$ in $\RCAO$, and obtain the associated second conjunct of \eqref{frood7}.  The remaining results are then sketched.

\medskip

To prove $\UBIN^{+}\di \paai$ in $\RCAO$, assume $\UBIN^{+}$ and suppose that $\paai$ is false, i.e.\ there is \emph{standard} $g$ such that $(\forall^{\st}n)g(n)=0 $ but also $ (\exists m^{0})g(m)\ne 0$.  Now define the \emph{standard} sequence $\alpha_{0}$ as follows
\be\label{sly}
\alpha_{0}(i):=
\begin{cases}
0 &  (\forall n\leq i)g(n)=0 \\
1 & \text{otherwise}
\end{cases}.
\ee
Furthermore, define the \emph{standard} reals $x_{\pm}:= \frac{1}{2}\pm \sum_{n=1}^{\infty} \frac{\alpha_{0}(n)}{2^{n}}$ and note that $x_{+}\approx x_{-}$ by the definition of $g$.  Since $x_{-}<_{\R} \frac12 <_{\R}x_{+}$, the binary expansion $\alpha_{\pm}$ of $x_{\pm}$ must be such that $\alpha_{-}(1)=0$ and $\alpha_{+}(1)=1$.  However, this implies that $\Phi(x_{-})(1)=0\ne 1=\Phi(x_{+})(1)$, and also $\Phi(x_{-})\not\approx_{1}\Phi(x_{+})$.    
Clearly, the latter contradicts the standard extensionality of $\Phi$ as $x_{+}\approx x_{-}$ was also proved.    
In light of this contradiction, we must have $\UBIN^{+}\di \paai$.  

\medskip

We now prove the second conjunct in \eqref{frood7}.  
Note that $\paai$ can easily be brought into the following normal form: 
\be\label{frux}
(\forall^{\st}f^{1})(\exists^{\st}i^{0})\big[(\exists n^{0})f(n)=0\di (\exists m\leq i)f(m)=0\big], 
\ee
where the formula in square brackets is abbreviated by $B(f, i)$.  Similarly, the second conjunct of $\UBIN^{+}$ has the following normal form:  
\be\label{kurve2}\textstyle
(\forall^{\st} x^{1}, y^{1}\in [0,1], k^{0})(\exists^{\st} N)\big[|x-y|<\frac{1}{N} \di \overline{\Phi(x)}k=_{0}\overline{\Phi(y)}k\big],
\ee
which is immediate by resolving `$\approx_{1}$' and `$\approx$', and bringing standard quantifiers outside. 
We denote the formula in square brackets in \eqref{kurve2} by $A(x, y, N, k, \Phi)$.  
Hence, $\UBIN^+\di \paai$ now easily yields:
\begin{align}
(\forall^{\st}\Phi, \Xi)\big[ [(\forall^{\st}x\in [0,1])\UBIN(\Phi, x)&\wedge (\forall^{\st} x,y\in [0,1], k^{0})A(x, y, \Xi(x, y, k), k, \Phi)]\notag\\
&  \di     (\forall^{\st}f^{1})(\exists^{\st}n)B(f,n)\big],  \label{tochtag}
\end{align}
as standard $\Xi$ as in the antecedent of \eqref{tochtag} yields standard outputs for standard inputs, and hence \eqref{kurve2} follows.  
Dropping the `st' in the antecedent of \eqref{tochtag} and bringing out the remaining standard quantifiers, we obtain the normal form:
\begin{align}
(\forall^{\st}\Phi, \Xi ,f^{1})(\exists^{\st}n)\big[ [(\forall x^{1}\in [0,1])&\UBIN(\Phi, x)\wedge\notag \\
& (\forall  U^{1}, S^{1}, k^{0})A(U, S, \Xi(U, S, k), k, \Phi)]  \di    B(f,n)\big]. \label{enoka}
\end{align}
Let $C(\Phi, \Xi, f, n)$ be the formula in big square brackets and apply Corollary~\ref{consresultcor2} to `$\RCAO\vdash (\forall^{\st}\Phi, \Xi, f^{1})(\exists^{\st}n)C(\Phi, \Xi, f, n)$' to obtain a term $t$ such that $\RCAo$ proves 
\be\label{drifgs}
 (\forall \Phi, \Xi, f^{1})(\exists n\in t(\Phi, \Xi, f))C(\Phi, \Xi, f, n).  
 \ee
Now define the term $s(\Phi, \Xi, f)$ as $\max_{i<|t(\Phi, \Xi, f)|}t(\Phi, \Xi, f)(i)$ and 
note that the formula $(\exists n\in t(\Phi, \Xi, f))C(\Phi, \Xi, f, n)$ implies $C(\Phi, \Xi, f, s(\Phi, \Xi, f))$.
Finally, bring the quantifier involving $f$ inside $C$ to obtain for all $\Phi, \Xi$ that
\[
 [(\forall x^{1}\in[0,1])\UBIN(\Phi, x)\wedge (\forall  U^{1}, S^{1}, k^{0})A(U, S, \Xi(U, S, k), k, \Phi)]  \di   (\forall f^{1}) B(f,s(\Phi, \Xi, f)).
\]
Thus, $s(\Phi, \Xi, \cdot)$ provides the functional $(\mu^{2})$ if $\Phi$ satisfies $(\forall x^{1}\in[0,1])\UBIN(\Phi,x )$ and $\Xi$ is the associated extensionality functional. 

\medskip

Finally, to prove $\paai\di \UBIN^{+}$, consider \eqref{frux} and apply $\HAC_{\INT}$ to the former to obtain $\nu^{1\di 0^{*}}$ such that $(\forall^{\st}f^{1})(\exists i^{0}\in \nu({f}))A(f, i)$, where $A$ is the formula in square brackets in \eqref{frux}.  Now define the \emph{standard} functional $\xi^{2}$ by 
\[
\xi(f):=(\mu m\leq \max_{i<|\nu(f)|}\nu(f)(i))(f(m)=0)
\]
and note that $[\MU(\xi)]^{\st}$, i.e.\ we have access to Feferman's search operator relative to `st'.  In particular, $\xi^{2}$ provides arithmetical comprehension (and \emph{Transfer} for $\Pi_{1}^{0}$-formulas):
\be\label{karmic}
(\forall^{\st}f^{1})\big[( f(\xi(f))=0) \asa (\exists m^{0})(f(m)=0)\asa (\exists^{\st} m^{0})(f(m)=0)  \big].  
\ee
To define $\Phi$ as in $\UBIN^{+}$, use $\xi$ from \eqref{karmic} to decide if $x\geq_{\R}\frac12$ or $x<_{\R}\frac12$ and define $\Phi(x)(0)$ as $1$ or $0$ respectively.  
Similarly, define $\Phi(x)(n+1)$ as $1$ or $0$ depending on whether $x\geq_{\R}\frac{1}{2}(x+\sum_{i=0}^{n}\frac{\Phi(x)(i)}{2^{i}})$ or not, again using $\xi$.  Then $\Phi$ is standard and 
satisfies $[\UBIN(\Phi)]^{\st}$.  Now apply $\paai$ to the latter and the axiom of extensionality to obtain $\UBIN^{+}$.  The first conjunct of \eqref{frood7} now follows in the same way as in the first part of the proof.      
\end{proof}
Note that the non-computable power of \emph{uniform} $\BIN$ (both nonstandard and non-nonstandard) arises from the fact that not all reals have a unique binary expansion.
Hence, for small (infinitesimal) variations of the input of the functional in $\UBIN$, we can produce large (standard) variations in the output.  This is exploited as follows in the previous proof.  
\begin{rem}[Nonstandard extensionality trick]\label{trick}\rm
First of all, we note that $\Phi$ as in $\UBIN(\Phi)$ is \emph{nonstandard} discontinuous in that for every $x,y\in \R$ such that $x<_{\R}\frac{1}{2}<_{\R}y \wedge x \approx \frac{1}{2}\approx y$ we have $\Phi(x)\not\approx_{1}\Phi(y)$, in particular $\Phi(x)(1)=0\ne 1=\Phi(y)(1)$.  
Secondly, we use \eqref{sly} to define \emph{standard} points $x_{\pm}$ at which $\Phi$ from $\UBIN^{+}$ is \emph{nonstandard discontinuous}, assuming $\neg\paai$.    
The ensuing contradiction with the standard extensionality of $\Phi$ yields $\UBIN^{+}\di \paai$.  Thirdly, applying term extraction to the (normal form \eqref{enoka} of the) latter implication, we obtain the effective implication \eqref{frood7}.  
\end{rem}
The previous technique is similar in spirit to Grilliot's trick, but note that our nonstandard technique produces an effective implication, \emph{without paying attention to effective content}.  
In particular, we used the non-constructive `proof by contradiction' to establish $\UBIN^{+}\di \paai$, and `independence of premises' to obtain the latter's normal form \eqref{enoka} (See e.g.\ \eqref{frux} and \eqref{kurve2}).  

\medskip

Note that we do not claim that the previous theorem (or the below theorems) is unique or a first in this regard: Kohlenbach's treatment of Grilliot's trick (\cite{kooltje}) and the \emph{proof mining} program (\cite{kohlenbach3}) are well-known to produce effective results from classical mathematics.  
What is surprising about results in this paper (in our opinion) is the emergence of effective content (with relative ease) from Nonstandard Analysis \emph{despite} claims 
that the latter is somehow fundamentally non-constructive by e.g.\ Bishop and Connes (See \cite{samsynt} for a detailed discussion of the Bishop-Connes critique).  

\medskip

Surprisingly, the proof of Theorem \ref{proto7} goes through constructively, as we discuss now.  
\begin{rem}[The system $\H$]\label{reasonsofspace}\rm
The system $\H$ is a conservative extension of Heyting arithmetic satisfying a term extraction theorem similar to Theorem \ref{consresult} (See \cite{brie}*{Theorem~5.9}).
Although $\H$ is based on intuitionistic logic, it does prove the following `standard' version of Markov's principle (See \cite{brie}*{p.\ 1978}):
\be\tag{$\textsf{MP}^{\st}$}
(\forall^{\st} f^{1})\big[ \neg\neg[(\exists^{\st}m)(f(m)=0)]  \di (\exists^{\st}n)(f(n)=0)    \big].
\ee
Now, the proof of $ \UBIN^{+}\di \paai$ in Theorem \ref{proto7} easily yields a proof of: 
\be\label{lekkerbekske}
\UBIN^{+} \di (\forall^{\st}f^{1})\big[(\exists m)(f(m)=0) \di \neg[(\forall^{\st}n^{0})(f(n)=0)] \big].
\ee
inside the system $\H$.  
However, combined with $\textsf{MP}^{\st}$, \eqref{lekkerbekske} yields that $\H$ also proves the implication $ \UBIN^{+}\di \paai$, using the same `proof by contradiction' proof used for Theorem \ref{proto7}.   Furthermore, similar to $\textsf{MP}^{\st}$, the system $\H$ also contains a `standard' version of the independence of premise schema (in the form of $\textsf{HIP}_{\forall^{\st}}$; see \cite{brie}*{p.\ 1978}).  Thanks to this schema, $\H$ proves $\paai \di \eqref{frux}$ and even that $\paai \di \UBIN^{+}$ implies its normal form \eqref{enoka}.  Applying the term extraction theorem \cite{brie}*{Theorem~5.9} for $\H$, a constructive proof of \eqref{frood7} is established.    
\end{rem}

\subsection{Weak weak K\"onig's lemma}\label{FUWKL}
In this section, we study the principle  \emph{weak weak K\"onig's lemma} ($\WWKL$ for short), using the standard extensionality trick in Remark \ref{trick}.
Note that $\WWKL$ was not directly studied in \cites{samzoo, samzooII}.  
\bdefi[Weak weak K\"onig's lemma]\label{leipi}~
\begin{enumerate}
\item We reserve `$T^{1}$' for trees and denote by `$T^{1}\leq_{1}1$' that $T$ is a \emph{binary} tree.  
\item For a binary tree $T$, define $\nu(T):=\lim_{n\di \infty}\frac{\{\sigma \in T: |\sigma|=n    \}}{2^{n}}$.
\item For a binary tree $T$, define `$\nu(T)>_{\R}a^{1}$' as $(\exists k^{0})(\forall n^{0})\big(\frac{\{\sigma \in T: |\sigma|=n    \}}{2^{n}}\geq a+\frac{1}{k}\big)$.
\item We define $\WWKL$ as $(\forall T \leq_{1}1)\big[ \nu(T)>_{\R}0\di (\exists \beta\leq_{1}1)(\forall m)(\overline{\beta}m\in T) \big]$.
\end{enumerate}     
\edefi
The principle $\WWKL$ is not part of the `Big Five' of RM, but there are \emph{some} equivalences involving the former (See \cite{simpson2}*{X.1}).  
In this section, we study the following uniform versions:
\be\tag{$\UWWKL$}
(\exists \Phi^{1\di 1})(\forall T \leq_{1}1)\big[ \nu(T)>_{\R}0\di (\forall m)(\overline{\Phi(T)}m\in T) \big]
\ee
Also, $\UWWKL(\Phi(T), T)$ is $\UWWKL$ without the leading quantifiers, and $\UWWKL^{+}$ is
\[
(\exists^{\st}\Phi^{1\di 1})\big[(\forall^{\st}T^{1})\UWWKL(\Phi(T), T)\wedge (\forall^{\st} T^{1}, S^{1})\big(T\approx_{1} S \di \Phi(T)\approx_{1}\Phi(S) \big)\big].
\]
Note that the second conjunct expresses that $\Phi$ is \emph{standard extensional}.  
We have the following theorem, which is the effective version of \cite{yamayamaharehare}*{Theorem 3.2}.  Note that  $\UWWKL(\Phi)$ is $(\forall T \leq_{1}1)\UWWKL(\Phi(T), T)$.  
\begin{thm}\label{proto}
From a proof in $\RCAO$ that $\UWWKL^{+}\asa \paai$, terms $s,t$ can be extracted such that $\RCAo$ proves:
\be\label{frood8}
(\forall \mu^{2})\big[\textsf{\MU}(\mu)\di \UWWKL(s(\mu)) \big] \wedge (\forall \Phi^{1\di 1})\big[ \UWWKL(\Phi)\di  \MU(t(\Phi, \Xi))  \big],
\ee
where $\Xi$ is an extensionality functional for $\Phi$.  
\end{thm}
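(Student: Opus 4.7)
The plan is to follow the template of the proof of Theorem \ref{proto7}, replacing the binary-expansion construction by a suitable pair of binary trees. For the forward direction $\UWWKL^+ \di \paai$, I would proceed by contradiction: suppose $g^1$ is standard with $(\forall^{\st}n)g(n)=0$ and $(\exists m)g(m)\neq 0$, and consider the standard binary trees
\[
T_a := \{\sigma \leq_{1} 1 : (\forall i < |\sigma|)(g(i)=0) \vee \sigma(0) = a\} \qquad (a \in \{0,1\}).
\]
For every standard $\sigma$ the left disjunct holds, so $T_0 \approx_{1} T_1$ (both agree standardly with the full binary tree), while at lengths $n \geq m$ only strings beginning with $a$ remain in $T_a$; in particular the ratio at each level is at least $1/2$, yielding $\nu(T_a) >_{\R} 0$ in the sense of Definition \ref{leipi}. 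Applying $\UWWKL^{+}$, the initial segment $\overline{\Phi(T_a)}n$ must lie in $T_a$ for every $n > m$, forcing $\Phi(T_a)(0) = a$. Hence $\Phi(T_0)(0) = 0 \neq 1 = \Phi(T_1)(0)$, contradicting the standard extensionality clause of $\UWWKL^{+}$ applied to $T_0 \approx_{1} T_1$.

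Next, I would bring $\paai$ and the standard extensionality of $\Phi$ into the normal forms of \eqref{frux} and \eqref{kurve2} (the latter with Baire space in place of $\R$), and cast $\UWWKL^{+} \di \paai$ in the shape $(\forall^{\st}\Phi, \Xi, f^{1})(\exists^{\st}n)\, C(\Phi, \Xi, f, n)$ for some internal $C$, where $\Xi$ is an extensionality functional as in \eqref{turki} for $\Phi$. Applying Corollary \ref{consresultcor2} and then a maximum-extraction as in the proof of Theorem \ref{proto7} yields a closed term $t$ such that, whenever $\Xi$ witnesses \eqref{turki} for $\Phi$, one has $\UWWKL(\Phi) \di \MU(t(\Phi,\Xi))$. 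This is the second conjunct of \eqref{frood8}.

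For the reverse direction $\paai \di \UWWKL^{+}$, I would apply $\HAC_{\INT}$ to the normal form of $\paai$ to obtain a standard $\xi^{2}$ with $[\MU(\xi)]^{\st}$, exactly as in the second half of the proof of Theorem \ref{proto7}. Using $\xi$, define $\Phi(T)$ to be the leftmost path through $T$ by recursion: set $\Phi(T)(k) := 0$ if the internal $\Pi_{1}^{0}$-predicate ``$(\overline{\Phi(T)}k)\conc\langle 0\rangle$ has extensions of every length in $T$'' holds, and $\Phi(T)(k) := 1$ otherwise; decidability of this predicate for standard inputs is supplied by $\xi$ via $\paai$. Positivity of $\nu(T)$ forces $T$ to be infinite, so the leftmost branch exists and lies in $T$, verifying $[\UWWKL(\Phi)]^{\st}$. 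Standard extensionality of $\Phi$ follows because, by $\paai$, each length-extension question is equivalent to its restriction to standard lengths, which depends only on standard bits of $T$; hence $T \approx_{1} S$ gives $\Phi(T) \approx_{1} \Phi(S)$. A final application of Corollary \ref{consresultcor2} to the normal form of $\paai \di \UWWKL^{+}$ extracts the term $s$ witnessing the first conjunct of \eqref{frood8}.

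The main obstacle I foresee is verifying standard extensionality of the leftmost-path functional: one must observe that, for standard $T$ and standard depth $k$, the iterated length-extension predicates used to fix $\Phi(T)(k)$ collapse via $\paai$ to bounded searches depending only on standard bits of $T$, so that nonstandard discrepancies between $T$ and $S$ leave $\Phi(T)$ and $\Phi(S)$ standardly equal. Once this is in hand, the remaining manipulations are essentially a verbatim adaptation of the argument for Theorem \ref{proto7}.
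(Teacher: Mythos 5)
Your proposal is correct and follows essentially the same approach as the paper: the same pair of trees agreeing standardly with $2^{\N}$ but having disjoint branches past the nonstandard jump of $g$, the same use of $\xi$ from \eqref{karmic} to build a path deterministically, and the same normal-form manipulations followed by Corollary~\ref{consresultcor2}. The only (minor) divergence is in verifying standard extensionality of $\Phi$ in the reverse direction: you argue that the length-extension predicates defining $\Phi(T)(k)$ depend only on standard bits of $T$, whereas the paper simply notes that for standard $T,S$, the external equivalence $T\approx_{1}S$ together with $\paai$ (applied to the $\Pi_{1}^{0}$-formula $(\forall\beta)(T(\beta)=S(\beta))$) already yields full equality $T=_{1}S$, after which the internal axiom \eqref{EXT} gives $\Phi(T)=_{1}\Phi(S)$; the latter route sidesteps the bit-dependence analysis you flag as the main obstacle.
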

\begin{proof}
First of all, to prove $\UWWKL^{+}\di \paai$ in $\RCAO$, assume $\UWWKL^{+}$ and suppose that $\paai$ is false, i.e.\ there is $f$ such that $(\forall^{\st}n)f(n)=0 \wedge (\exists m^{0})f(m)\ne 0$.
Now define the trees $T_{i}$ for $i=0,1$ as follows
\[
\sigma\in T_{i}\asa \big[\sigma(0)=i \vee \big[\sigma(0)=1-i\wedge (\forall m\leq |\sigma|)f(m)=0\big] \big].
\]
By the definition of $T_{i}$ and the behaviour of $f$, we have $T_{0}\approx_{1} T_{1}\approx_{1} 2^{\N}$, where the latter is the full binary tree and $\N:=\{n^{0}:n=_{0}n\}$.  Furthermore, $\nu(T_{0})=\nu(T_{1})=\frac12$ hold, and observe that $T_{0}$ (resp.\ $T_{1}$) only has paths starting with $0$ (resp.\ $1$).  Hence, we have 
$\Phi(T_{0})(0)=0\ne 1= \Phi(T_{1})(0)$ for $\Phi$ as in $\UWWKL^{+}$, which yields $\Phi(T_{0})\not\approx_{1} \Phi(T_{1})$.  Clearly, the latter contradicts the standard extensionality of $\Phi$.    
In light of this contradiction, we have $\UWWKL^{+}\di \paai$.  

\medskip

Secondly, to prove $\paai\di \UWWKL^{+}$, note that $\paai$ implies \eqref{karmic} as established in the proof of Theorem \ref{proto7}.   
To define $\Phi$ as in $\UWWKL^{+}$, let standard $T^{1}\leq_{1}1$ be such that $\nu(T)>_{\R}0$ and use (standard) $\xi$ from \eqref{karmic} to decide if
\be\label{contjas}
(\forall^{\st} n^{0})(\exists \beta^{0^{*}}\in T)(\beta(0)=1 \wedge |\beta|=n)\textup{ or } (\forall^{\st} n^{0})(\exists \beta^{0^{*}}\in T)(\beta(0)=0 \wedge |\beta|=n),
\ee
and define $\Phi(T)(0)$ as $1$ if the first formula in \eqref{contjas} holds, and $0$ otherwise.  
Similarly, for $\Phi(T)(m+1)$ again use $\xi$ from \eqref{karmic} to decide if the following formula holds:
\[
(\forall^{\st} n^{0}\geq m+1)(\exists \beta^{0^{*}}\in T)(\overline{\beta}m=\Phi(T)(0)*\dots* \Phi(T)(m)  \wedge\beta(m+1)=1 \wedge |\beta|=n) 
\]
and define $\Phi(T)(m+1)$ as $1$ if it does, and zero otherwise.  
Then $\Phi$ is standard and satisfies $[\UWWKL(\Phi)]^{\st}$.  Now apply $\paai$ to the latter and \eqref{EXT} to obtain $\UWWKL^{+}$. 

\medskip

Thirdly, we now prove the second conjunct in \eqref{frood8}.    
Note that $\paai$ can easily be brought into the normal form \eqref{frux}
where the formula in square brackets is abbreviated by $B(f, i)$.  Similarly, the second conjunct of $\UWWKL^{+}$ has the following normal form:  
\be\label{kurve}
(\forall^{\st} T^{1}, S^{1}, k^{0})(\exists^{\st} N)\big[\overline{T}N=_{0}\overline{S}N \di \overline{\Phi(T)}k=_{0}\overline{\Phi(S)}k\big],
\ee
which is immediate by resolving `$\approx_{1}$' and bringing standard quantifiers outside. 
We denote the formula in square brackets in \eqref{kurve} by $A(T, S, N, k, \Phi)$.  
Hence, the implication $\UWWKL^+\di \paai$ now immediately yields:
\begin{align}
(\forall^{\st}\Phi, \Xi)\big[ [(\forall^{\st}T^{1})\UWWKL(\Phi(T), T)&\wedge (\forall^{\st} U^{1}, S^{1}, k^{0})A(U, S, \Xi(U, S, k), k, \Phi)]\notag\\
&  \di     (\forall^{\st}f^{1})(\exists^{\st}n)B(f,n)\big],\label{similarluuuuu}
\end{align}
by strengthening the antecedent by introducing $\Xi$.
Dropping the `st' in the antecedent of the implication and bringing out the remaining standard quantifiers:
\[
(\forall^{\st}\Phi, \Xi ,f^{1})(\exists^{\st}n)\big[ [(\forall T^{1})\UWWKL(\Phi(T), T)\wedge (\forall  U, S, k)A(U, S, \Xi(U, S, k), k, \Phi)]  \di    B(f,n)\big]. 
\]
Let $C(\Phi, \Xi, f, n)$ be the formula in big square brackets and apply Corollary~\ref{consresultcor2} to `$\RCAO\vdash (\forall^{\st}\Phi, \Xi, f^{1})(\exists^{\st}n)C(\Phi, \Xi, f, n)$' to obtain a term $t$ such that $\RCAo$ proves 
\be\label{drifgs2}
 (\forall \Phi, \Xi, f^{1})(\exists n\in t(\Phi, \Xi, f))C(\Phi, \Xi, f, n).  
 \ee
Now define the term $s(\Phi, \Xi, f)$ as $\max_{i<|t(\Phi, \Xi, f)|}t(\Phi, \Xi, f)(i)$ and 
note that the formula $(\exists n\in t(\Phi, \Xi, f))C(\Phi, \Xi, f, n)$ implies $C(\Phi, \Xi, f, s(\Phi, \Xi, f))$.
Finally, bring the quantifier involving $f$ inside $C$ to obtain for all $\Phi, \Xi$ that
\[
 [(\forall T^{1})\UWWKL(\Phi(T), T)\wedge (\forall  U^{1}, S^{1}, k^{0})A(U, S, \Xi(U, S, k), k, \Phi)]  \di   (\forall f^{1}) B(f,s(\Phi, \Xi, f)).
\]
Thus, $s(\Phi, \Xi, \cdot)$ provides $(\mu^{2})$ if $\Phi$ satisfies $(\forall T^{1})\UWWKL(\Phi(T),T )$ and $\Xi$ is the associated extensionality functional. 

\medskip

Finally, the first conjunct in \eqref{frood8} is proved as follows: $\paai\di \UWWKL^{+}$ yields
\be\label{kanttt}
(\forall^{\st}f^{1})(\exists^{\st}n)B(f,n)\di  (\forall^{\st}T^{1})(\exists^{\st}\alpha^{1}\leq_{1}1 )\UWWKL(\alpha, T),
\ee
where we used the same notations $ B$ as in \eqref{similarluuuuu}.  Since standard functionals yield standard output for standard input by the basic axioms of Definition \ref{debs}, \eqref{kanttt} yields
\be\label{kanttt2}
(\forall^{\st} \mu^{2})\big[(\forall^{\st}f^{1})B(f,\mu(f))\di  (\forall^{\st}T^{1})(\exists^{\st}\alpha^{1}\leq_{1}1 )\UWWKL(\alpha, T)\big].
\ee
Weakening the antecedent of \eqref{kanttt2} and bringing outside the standard quantifiers, we obtain
\be\label{kanttt3}
(\forall^{\st} \mu^{2}, T^{1})(\exists^{\st}\alpha^{1}\leq_{1}1 )\big[(\forall f^{1})B(f,\mu(f))\di  \UWWKL(\alpha, T)\big].
\ee
Applying Corollary \ref{consresultcor2} to `$\RCAO\vdash \eqref{kanttt3}$', we obtain a term $t$ such that $\RCAo$ proves 
\be\label{bl;og}
(\forall \mu^{2}, T^{1})(\exists \alpha^{1}\in t(\mu, T) )\big[(\forall f^{1})B(f,\mu(f))\di  \UWWKL(\alpha, T)\big].
\ee
Since $\mu^{2}$ satisfying the antecedent of \eqref{bl;og} is indeed Feferman's mu as in $(\mu^{2})$, we can select the `right' $\alpha\in t(\mu, T)$, and we have obtained the first conjunct in \eqref{frood8}.
\end{proof}
As suggested by its name, $\WWKL$ is a weakening of $\WKL$, namely to binary trees with positive measure.  
Hence, the above proof should also go through for $\UWKL^{+}$, which is $\UWWKL^{+}$ \emph{without} the aforementioned restriction.  
In particular, the measure does not play any role except saying that the tree is infinite.
We hasten to add that the restriction to trees of positive measure \emph{does} play an important role for $\STP$ and $\LMP$ by Theorem \ref{komo}. 
\subsection{The intermediate value theorem}\label{hivt}
In this section, we study the \emph{intermediate value theorem} \textsf{IVT}.   
We will apply the nonstandard extensionality trick from Remark~\ref{trick} to a nonstandard version 
of the `usual' Brouwerian counterexample to $\IVT$ from \cite{beeson1}*{I.7}.

\medskip

Now, $\IVT$ has a proof in $\RCA_{0}$ when formulated in second-order arithmetic using so-called RM codes (\cite{simpson2}*{II.6.2}).  
However, $\IVT$ is not constructively true (See \cite{beeson1}*{I.7}) as the aforementioned proof makes essential use of the law of excluded middle, while a finer classification may be found in \cite{bergske}.  As a consequence of this non-constructive status, there is an equivalence (See \cite{kohlenbach2}*{Prop.~3.14}) between the Turing jump functional $(\exists^{2})$ and uniform $\IVT$, the latter defined as $\UIVT$ as follows (See \cite{kohlenbach2}*{\S3} for a number of variations):
\be\tag{$\UIVT$}
(\exists \Phi^{(1\di1)\di1 })(\forall f\in \overline{C})[f(\Phi(f))=_{\R}0], 
\ee
where `$f\in \overline{C}$' is short for `$f(0)<_{\R}0 <_{\R} f(1)\wedge \eqref{kraakje}$' where the latter is as follows
\be\label{kraakje}\textstyle
(\forall k^{0})(\forall x^{1}\in [0,1])(\exists N^{0})(\forall y^{1}\in [0,1])( |x-y|<_{\R}\frac{1}{N}\di |f(x)-f(y)|<_{\R}\frac1k), 
\ee
i.e.\ the \emph{internal} `epsilon-delta' definition of (pointwise) continuity on $[0,1]$.  We also write `$f\in C([0,1])$' if $f:\R\di \R$ satisfies \eqref{kraakje}.  
We define $\UIVT(\Phi)\equiv(\forall f\in \overline{C})[f(\Phi(f))=_{\R}0]$.   
Now consider the following nonstandard and uniform version of $\UIVT$:  
\be\tag{$\UIVT^{+}$}
(\exists^{\st} \Phi)\big[(\forall^{\st} f\in \overline{C})[f(\Phi(f))=_{\R}0 ]\wedge (\forall^{\st}f, g\in \overline{C})\big( f\approx g\di \Phi(f)\approx \Phi(g)    \big) \big],
\ee
 where `$f\approx g$' is short for $(\forall^{\st}q^{0}\in [0,1])(f(q)\approx g(q))$.  
Note that the second conjunct of $\UIVT^{+}$ expresses that $\Phi$ is `standard extensional', i.e.\ satisfies extensionality as in \eqref{furg} relative to `st'.  
We have the following theorem.  
\begin{thm}\label{proto667}
From a proof in $\RCAO$ that $\UIVT^{+}\asa \paai$, terms $s,t$ can be extracted such that $\RCAo$ proves:
\be\label{frood667}
(\forall \mu^{2})\big[\textsf{\MU}(\mu)\di \UIVT(s(\mu)) \big] \wedge (\forall \Phi^{1\di 1})\big[ \UIVT(\Phi)\di  \MU(t(\Phi, \Xi))  \big],
\ee
where $\Xi$ is an extensionality functional for $\Phi$.
\end{thm}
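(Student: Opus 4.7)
The plan is to follow the template of Theorems \ref{proto7} and \ref{proto}: I would first establish the nonstandard equivalence $\UIVT^{+}\asa \paai$ in $\RCAO$ via the nonstandard extensionality trick of Remark \ref{trick}, and then extract terms from the normal forms of the two implications by Corollary \ref{consresultcor2}, obtaining \eqref{frood667}.

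For the direction $\UIVT^{+}\to \paai$, I would adapt the classical Brouwerian counterexample to $\IVT$ (cf.\ \cite{beeson1}*{I.7}). Assume $\UIVT^{+}$ with standard witness $\Phi$, and suppose towards a contradiction that $\paai$ fails, so there is standard $g^{1}$ with $(\forall^{\st}n)g(n)=0$ while $(\exists m)g(m)\neq 0$. Using $\alpha_{0}$ as in \eqref{sly}, set $c:=\sum_{n=1}^{\infty}\alpha_{0}(n)/2^{n}$, a standard real that is strictly positive yet infinitesimal. Now build two standard piecewise-linear functions $f^{\pm}\in \overline{C}$ with $f^{\pm}(0)=-1$, $f^{\pm}(1)=1$, which are constant equal to $\pm c$ on $[1/3,2/3]$ and are linear on the outer intervals so as to match at the breakpoints. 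A uniform computation yields $|f^{+}(x)-f^{-}(x)|\leq 2c$ on $[0,1]$, so $f^{+}\approx f^{-}$. On the other hand, the unique zero of $f^{+}$ lies in $[0,1/3]$ infinitely close to $1/3$, while the unique zero of $f^{-}$ lies in $[2/3,1]$ infinitely close to $2/3$. Hence $\Phi(f^{+})\approx 1/3$ and $\Phi(f^{-})\approx 2/3$, so $\Phi(f^{+})\not\approx \Phi(f^{-})$, contradicting the standard extensionality of $\Phi$.

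For the converse $\paai \to \UIVT^{+}$, I would use that $\paai$ yields the search operator $\xi$ satisfying \eqref{karmic}, exactly as in the proof of Theorem \ref{proto7}. With $\xi$ in hand, define $\Phi(f)$ for $f\in \overline{C}$ by bisection: at each stage, use $\xi$ on an appropriate sequence derived from $f$'s Cauchy approximations to decide whether $f$ is strictly positive, strictly negative, or zero at the current midpoint, and halve the interval accordingly. The explicit modulus of continuity encoded in $f\in \overline{C}$ via \eqref{kraakje} ensures that this decision is arithmetical in $f$, so $\xi$ suffices. The resulting $\Phi$ is standard and internally satisfies $\UIVT(\Phi)$; invoking $\paai$ and the extensionality axiom \eqref{EXT} then delivers $\UIVT^{+}$.

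Finally, the term extraction proceeds exactly as in the proofs of Theorems \ref{proto7} and \ref{proto}. Both $\paai$ and $\UIVT^{+}$ admit normal forms after resolving the epsilon-delta clause \eqref{kraakje} defining $\overline{C}$ and pulling standard quantifiers out; the standard extensionality conjunct of $\UIVT^{+}$ becomes the existence of an extensionality functional $\Xi$ of the appropriate higher type witnessing \eqref{turki} for $\Phi$. Applying Corollary \ref{consresultcor2} to the normal forms of the two implications yields terms $t$ and $s$ respectively, establishing \eqref{frood667}. The main step requiring real care, beyond what is seen in the tree case of Theorem \ref{proto}, is the bookkeeping of the modulus of continuity implicit in $f\in \overline{C}$ through the normal-form manipulations; I expect this to be the main obstacle, and it is handled in the same manner as Kohlenbach's analysis of $\UIVT$ in \cite{kohlenbach2}*{\S3}.
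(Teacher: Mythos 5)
Your proposal follows the paper's approach almost exactly: a Brouwerian counterexample to $\IVT$ perturbed by an infinitesimal derived from $\neg\paai$ (the paper uses $f_{\pm}:=f_{0}\pm\sum_{i}g(i)/2^{i}$ with the standard piecewise-linear $f_{0}$, your construction is a cosmetic variant), then $\paai\di\UIVT^{+}$ via $\xi$ and bisection, then normal forms and Corollary~\ref{consresultcor2}. Two corrections, though. First, in the reversal the paper first uses $\xi$ to decide whether $(\exists^{\st}q^{0}\in[0,1])(f(q)\approx 0)$: if yes, output that $q$; only if no does the bisection begin, in which case $f(q)\not\approx 0$ at every midpoint so the two-way sign decision suffices. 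Your proposed three-way decision at each midpoint can be made to work via $\xi$, but the paper's case split is cleaner and also guards the final step where continuity (and $\paai$) is invoked to conclude $f(\Phi(f))=_{\R}0$; the derivation of a \emph{standard} modulus $\Psi(x,k)$ from $\paai$ applied to \eqref{kraakje} plus $\HAC_{\INT}$ is an explicit step you elide.

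Second and more importantly, you identify the ``bookkeeping of the modulus of continuity implicit in $f\in\overline{C}$'' as the main obstacle in the normal-form manipulations. The paper explicitly argues the opposite: `$f\in\overline{C}$' (including \eqref{kraakje}) is an \emph{internal} formula, so in the relativised quantifier $(\forall^{\st}f\in\overline{C})(\ldots)$ it is untouched by the `st' and simply gets absorbed into the internal matrix of the normal form. Nothing about \eqref{kraakje} needs to be ``resolved'' or Skolemised; the only data that must be surfaced for term extraction is the extensionality functional $\Xi$ for $\Phi$, exactly as in Theorems~\ref{proto7} and~\ref{proto}. So the obstacle you flag is in fact a non-issue, and Kohlenbach's modulus bookkeeping from \cite{kohlenbach2}*{\S3} is not needed here.
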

\begin{proof}
First of all, we prove $\UIVT^{+}\di \paai$, for which we make use of the `usual' Brouwerian counterexample to $\IVT$ (See e.g.\ \cite{beeson1}*{Fig.\ 2, p.\ 12}).  
Let $f_{0}:\R\di \R$ be the function which is $3x-1$ for $x\in [0,\frac13]$, $3x-2$ for $x\in [\frac23, 1]$, and zero for $x\in [\frac13, \frac23]$.     
Suppose $\paai$ is false and let standard $g^{1}$ be such that $(\forall^{\st}n)(g(n)=0)$ and $(\exists m_{0})g(m_{0})\ne 0$.  Define standard functions $f_{\pm}(x):= f_{0}(x)\pm\sum_{i=0}^{\infty}\frac{g(i)}{2^{i}}$ and note that $f_{+}(x)\approx f_{-}(x)$ for all $x\in [0,1]$.  However, for $\Phi$ as in $\UIVT^{+}$, we have $\Phi(f_{+})<_{\R}\frac13$ and $\Phi(f_{-})>_{\R}\frac23$, which contradicts the second conjunct of $\UIVT^{+}$.  Hence, $\paai$ follows and we obtain $\UIVT^{+}\di \paai$.

\medskip
  
Secondly, for the reversal $\paai \di \UIVT^{+}$, fix standard $k^{0}, x^{1}\in [0,1] $ and $f:\R\di\R$ in \eqref{kraakje} and consider the following $\Sigma_{2}^{1}$-formula:
\be\label{boal}\textstyle
(\exists N^{0})(\forall q^{0}\in [0,1])( |x-q|<_{\R}\frac{1}{N}\di |f(x)-f(q)|\leq_{\R}\frac1k).
\ee
Applying\footnote{
To obtain $\eqref{boal}\di\eqref{aimino}$, note that $\neg\eqref{aimino}$ has a normal form; apply $\HAC_{\INT}$ to remove the existential quantifier and then apply $\paai$ to the resulting formula to obtain a contradiction with \eqref{boal}.
In general, similar to how one `bootstraps' $\Pi_{1}^{0}$-comprehension from $\ACA_{0}$, the system $\RCAO+\paai$ proves $\varphi\asa \varphi^{\st}$ for \emph{any} internal arithmetical formula (only involving standard parameters).  Indeed, recall that $\paai\di (\mu^{2})^{\st}$ (See the proof of Theorem \ref{proto7}) and use the latter to remove all existential quantifiers (except the leading one, if such there is) from $\varphi^{\st}$; applying $\paai$ to the resulting universal formula yields $\varphi$.  The implication $\varphi\di \varphi^{\st}$ follows from the previous, $ \neg\varphi^{\st}\di \neg\varphi$ in particular.} $\paai$ to the formula \eqref{boal} yields:
\be\label{aimino}\textstyle
(\exists^{\st} N^{0})(\forall^{\st} q^{0}\in [0,1])( |x-q|\ll \frac{1}{N}\di |f(x)-f(q)|\lessapprox \frac1k).
\ee
Applying $\paai$ to the universal formula in \eqref{aimino}, we obtain for standard $f$ that
\be\label{opa}\textstyle
(\forall^{\st}x\in [0,1], k^{0})(\exists^{\st} N^{0})(\forall  q^{0}\in [0,1])( |x-q|<_{\R} \frac{1}{N}\di |f(x)-f(q)|\leq_{\R} \frac1k),
\ee
and applying $\HAC_{\INT}$ to the previous formula yields standard $\Phi^{(1\times 0)\di 0^{*}}$ such that 
\[\textstyle
(\forall^{\st}x\in [0,1], k^{0})(\exists  N^{0}\in \Phi(x, k))(\forall  q^{0}\in [0,1])( |x-q|<_{\R} \frac{1}{N}\di |f(x)-f(q)|\leq_{\R} \frac1k),
\]
Define $\Psi(x, k)$ as $\max_{i<|\Phi(x, k)|}\Phi(x, k)(i)$ and note that $\Psi$ provides a kind of \emph{modulus of continuity} for (standard) $f$.  
With this continuity in place, we can follow (a variation of) the classical proof of $\IVT$ in \cite{simpson2}*{II.6} to define {standard} $\Phi$ as in $\UIVT^{+}$ as follows.

\medskip

To this end, recall that $\paai$ implies the existence of standard $\xi$ as in \eqref{karmic}, which allows us to decide for standard $f\in \overline{C}$ if $(\exists^{\st}q^{0}\in [0,1])(f(q)\approx 0)$ holds or not.  
If the latter holds, use $\xi$ to find such $q^{0}$ and define $\Phi(f):=q$, which also yields $f(\Phi(f))=_{\R}0$ by $\paai$.   
If however $(\forall^{\st}q^{0}\in [0,1])(f(q)\not\approx 0)$, define $\Phi(f)(0)$ as $0$ or $\frac{1}{2}$ depending on whether $f(\frac12)\ll 0$ or $f(\frac12)\gg 0$ ($\xi$ as in \eqref{karmic} decides which disjunct holds).       
Similarly, define $\Phi(f)(n+1)$ as $\Phi(f)(n)$ or $\Phi(f)(n)+\frac{1}{2^{n+2}}$ depending on $f(\Phi(f)(n)+\frac{1}{2^{n+2}})\ll0$ or $f(\Phi(f)(n)+\frac{1}{2^{n+2}})\gg0$ ($\xi$ as in \eqref{karmic} again decides which disjunct holds).  

\medskip

Note that by $\paai$, we have $f(z)\gg 0 \asa f(z)>_{\R}0$ for any standard $z^{1}\in \R $ and $f:\R\di \R$.  
In light of \eqref{karmic}, we also have access to $I\Sigma_{1}$ relative to `st', assuming all parameters involved are standard.    
Hence, it is easy to prove that $\Phi(f)$ is a real number such that $(\forall^{\st}n)(\Phi(f)(n)<_{\R}\Phi(f)<_{\R}\Phi(f)(n)+\frac{1}{2^{n+1}})$ and $f(\Phi(f)(n))\ll 0$ and $f(\Phi(f)(n)+\frac{1}{2^{n+1}})\gg 0$.  In light of these facts and the continuity of $f$ as in \eqref{opa} for $x=\Phi(f)$, we obtain $f(\Phi(f))\approx 0$ and also $f(\Phi(f))=_{\R}0$ by $\paai$.  
Furthermore, $\Phi$ only invokes $f$ on rational numbers and $(\forall^{\st} q^{0}\in [0,1])(f(q)\approx g(q))$ for standard $f, g\in \overline{C}$ thus implies $(\forall q^{0}\in [0,1])(f(q)=_{\R} g(q))$ by $\paai$.  By the previous property and the extensionality of $f,g$, we have $\Phi(f)=_{\R}\Phi(g)$, and hence the second conjunct of $\UIVT^{+}$.     
  
\medskip

Thirdly, one obtains a normal form for $\UIVT^{+}\asa \paai$ in the same way as in the proofs of Theorems \ref{proto7} and \ref{proto}.  
Applying term extraction as in Theorem~\ref{consresult}, one then readily obtains \eqref{frood667}.  For completeness, we mention the two normal forms corresponding to $\paai\di \UIVT^{+}$ and $\UIVT^{+}\di \paai$, namely as follows:
\[
(\forall^{\st} f\in \overline{C},\mu^{2})(\exists^{\st}x\in [0,1])  \big[ (\forall g^{1})B(g,\mu(g))\di (f(x)=_{\R}0 )\big].
\]
\begin{align*}\textstyle
(\forall^{\st} \Phi, \Xi, g^{1})(\exists^{\st}n^{0})\big[\big((\forall z\in  \overline{C})&(z(\Phi(z))=_{\R}0 )\textstyle\wedge (\forall x, y \in \overline{C}, k^{0})(\exists q^{0},n^{0}\in \Xi(x,y,k))\\
&\textstyle\big( |x(q) -y(q)|<\frac{1}{n}\di |\Phi(x)- \Phi(y)|<\frac{1}{k}    \big)\big) \di B(g, n)\big],
\end{align*}
where we used the notation $B$ from \eqref{similarluuuuu}. 
To be absolutely clear, we point out that the formula `$(\forall^{\st} f\in \overline{C})A(f)$' is short for $(\forall f)([\st(f)\wedge f\in \overline{C}]\di A(f))$.  
In particular, `$f\in \overline{C}$' is the \emph{internal} formula `$f(1)>_{\R}0>_{\R} f(0)\wedge \eqref{kraakje}$' which remains \emph{untouched} by the addition of `st' to the quantifier over the variable $f$.    
For this reason, the formula `$f\in \overline{C}$' does not play any role in obtaining the normal form of $\UIVT^{+}\asa \paai$ and the associated term extraction via Theorem \ref{consresult}.  
\end{proof}
The previous proof is easily adapted to the Brouwerian counterexample to the Weierstra\ss~maximum theorem, as follows.  
\begin{rem}[Weierstra\ss~maximum theorem]\rm
The Brouwerian counterexample to the Weierstra\ss~maximum theorem is given in \cite{beeson1}*{I.6} by a function with two relative maxima.  
For instance, one can use $f_{\pm}:\R\di \R$ which is $(1\pm x_{0})|\sin 7x |$ for $x\in [0, \frac{1}{7}\pi]$ and $(1\mp x_{0})|\sin 7x |$ for $x\in [\frac{1}{7}\pi, 1]$, where $x_{0}:=\sum_{i=0}^{\infty}\frac{g(i)}{2^{i}}$ and $g$ as in the previous proof.  A functional witnessing the Weierstra\ss~maximum theorem will map $f_{+}$ to $\frac{\pi}{14}$ and $f_{-}$ to $\frac{3\pi}{14}$ while $f_{+}\approx f_{-}$ under the same assumptions as in the previous proof.
\end{rem}

\subsection{Rational numbers and the law of excluded middle}\label{carmichael}
In this section, we study the classical dichotomy that every real number is either rational or not, i.e.\  
\be\label{LEM2}\tag{\textsf{DQ}}
(\forall x\in \R)\big[ (\exists q\in \Q)(q=_{\R}x)\vee (\forall r\in \Q)(r\ne_{\R}x) \big].  
\ee
Our study will yield interesting insights into the role of the law of excluded middle in Nonstandard Analysis.  
In particular, we will offer a partial explanation \emph{why} Nonstandard Analysis can produce computational information, as observed in the previous sections (and the aforementioned references).  

\medskip

First of all, \ref{LEM2} is a trivial consequence of the law of excluded middle, and the former is indeed equivalent to \textsf{LPO} in constructive mathematics (\cite{brich}*{p.\ 5}).  
We will study the \emph{nonstandard} version of \ref{LEM2}, defined as follows:
\be\label{LEM1}\tag{$\textup{\textsf{DQ}}_{\ns}$}
(\forall^{\st} x\in \R)\big[ (\exists^{\st}q\in \Q)(q=_{\R}x)\vee (\forall r\in \Q)(r\ne_{\R}x) \big],
\ee
The uniform version of \ref{LEM2} is also obvious:
\be\label{LEM3}\tag{\textsf{UDQ}}
(\exists \Phi^{2})(\forall x\in \R)\big[ (\Phi(x)\in \Q \wedge q=_{\R}x)\vee (\forall r\in \Q)(r\ne_{\R}x) \big].  
\ee
Let $\textsf{UDQ}(\Phi)$ be the previous with the leading quantifier dropped.
\begin{thm}\label{krefje}
In $\RCAO$, we have $\paai\asa \textsf{\textup{DQ}}_{\ns}$.  From the latter proof, we can extract terms $s$ and $t$ such that
\be\label{unik}
(\forall \mu^{2})\big[\textsf{\MU}(\mu)\di \UDQ(s(\mu)) \big] \wedge (\forall \Phi^{2})\big[ \UDQ(\Phi)\di  \MU(t(\Phi))  \big].
\ee
\end{thm}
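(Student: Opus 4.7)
The plan is to follow the template already established by Theorems~\ref{proto7}, \ref{proto}, and~\ref{proto667}: first establish the nonstandard equivalence $\paai\asa\textsf{\textup{DQ}}_{\ns}$ in $\RCAO$, then put each direction in the normal form $(\forall^{\st}\underline x)(\exists^{\st}\underline y)\varphi$ with $\varphi$ internal, and finally invoke Corollary~\ref{consresultcor2} to extract the terms $s$ and $t$ witnessing~\eqref{unik}. A pleasant simplification over the earlier three theorems is that $\textsf{\textup{DQ}}_{\ns}$ carries no standard extensionality clause, so no auxiliary extensionality functional $\Xi$ is required and the extracted terms depend only on $\mu$ and $\Phi$, exactly matching the shape of~\eqref{unik}.

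For $\paai\di\textsf{\textup{DQ}}_{\ns}$, I would fix standard $x\in\R$ and write $(\exists q\in\Q)(q=_{\R}x)$ as $(\exists n^{0})(\textup{rat}(n)=_{\R}x)$ using a standard enumeration $\textup{rat}\colon\N\to\Q$. This is an internal arithmetical formula whose only free parameter $x$ is standard, so by the arithmetical transfer principle already derived from $\paai$ in the footnote to the proof of Theorem~\ref{proto667}, it is equivalent to $(\exists^{\st}n)(\textup{rat}(n)=_{\R}x)$. Classical case analysis on this equivalent formula immediately yields the disjunction required by $\textsf{\textup{DQ}}_{\ns}$.

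The reverse implication is the heart of the argument and will be proved by contraposition. Suppose standard $g\leq_{1}1$ satisfies $(\forall^{\st}n)(g(n)=0)$ yet $(\exists m)(g(m)\neq 0)$, and write $m_{0}$ for the (necessarily nonstandard) least such $m$. The key construction is the standard, primitive recursive, fast Cauchy sequence
\[
q_{n}^{g}:=2+2^{-k_{n}^{g}},\qquad k_{n}^{g}:=\min\bigl(\{m\leq n:g(m)=1\}\cup\{n+1\}\bigr),
\]
whose limit is the standard real $x_{g}=_{\R}2+2^{-m_{0}}$. The main observation is then that $x_{g}$ is \emph{rational but not standardly rational}: any standard rational $q$ with $q=_{\R}x_{g}$ would satisfy $q-2\approx 0$, forcing $q=2$ (every standard rational infinitely close to $0$ equals $0$, provable in $\RCAO$ directly) and contradicting $x_{g}>_{\R}2$; while the rational $r:=2+2^{-m_{0}}$ directly refutes $(\forall r\in\Q)(r\neq_{\R}x_{g})$. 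Both disjuncts of $\textsf{\textup{DQ}}_{\ns}$ therefore fail at $x_{g}$, yielding the desired contradiction. The only obstacle I foresee is the routine three-case check that $(q_{n}^{g})$ is genuinely fast Cauchy, depending on whether a nonzero of $g$ has been detected by stage~$n$.

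For the term extraction, let $B(f,i)$ denote the bracket in the normal form~\eqref{frux} of $\paai$, and observe that $\textsf{\textup{DQ}}_{\ns}$ is classically equivalent to $(\forall^{\st}x^{1})(\exists^{\st}N^{0})D(x,N)$ where $D(x,N):=[(\exists r^{0})(\textup{rat}(r)=_{\R}x)\di\textup{rat}(N)=_{\R}x]$ is internal. Then $\paai\di\textsf{\textup{DQ}}_{\ns}$ acquires a normal form of shape $(\forall^{\st}\mu^{2},x^{1})(\exists^{\st}N^{0})[(\forall f^{1})B(f,\mu(f))\di D(x,N)]$ after the usual strengthening of the antecedent; Corollary~\ref{consresultcor2} yields a closed term $s_{0}$, and picking the element of $s_{0}(\mu,x)$ which actually witnesses $D(x,\cdot)$ (selected using $\mu$, which decides the antecedent of $D$) supplies the first conjunct of~\eqref{unik}. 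Dually, Skolemising the antecedent of $\textsf{\textup{DQ}}_{\ns}\di\paai$ by a functional $\Phi$ produces the normal form $(\forall^{\st}\Phi^{2},f^{1})(\exists^{\st}i^{0})[(\forall x^{1})D(x,\Phi(x))\di B(f,i)]$; Corollary~\ref{consresultcor2} then yields $t_{0}(\Phi,f)$, and setting $t(\Phi,f):=\max_{i<|t_{0}(\Phi,f)|}t_{0}(\Phi,f)(i)$ delivers the second conjunct.
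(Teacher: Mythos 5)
Your proposal is correct and follows essentially the same approach as the paper: the paper's proof of $\textsf{\textup{DQ}}_{\ns}\di\paai$ also proceeds by contraposition, exhibiting a standard real (there $x:=\sum_{i}h(i)/2^{i}$ with $h(n)=1$ iff $f$ has no zero up to $n$, which converges to a dyadic rational $\approx 2$; your $x_{g}=2+2^{-m_{0}}$ is the same idea) that is rational but not equal to any standard rational, and the paper likewise dispenses with the forward direction as ``trivial'' and obtains~\eqref{unik} via the same normal-form and term-extraction template as in Theorem~\ref{proto7}. The only cosmetic care point is that $\neg\paai$ directly yields a standard $f$ with $(\forall^{\st}n)(f(n)\ne 0)\wedge(\exists m)(f(m)=0)$, so your $g\leq_{1}1$ with the opposite sign convention requires the obvious substitution $g(n):=1\dotminus\min(f(n),1)$ before the construction applies.
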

\begin{proof}
The first forward implication is trivial while the first reverse equivalence follows easily:  Suppose $\paai$ is false and consider $f^{1}$ such that $(\forall^{\st}n)(f(n)=0)\wedge (\exists m)(f(m)\ne 0)$.
Define the real $x:=\sum_{i=0}^{\infty}\frac{h(n)}{2^{n}}$ where $h(n)=1$ if $(\forall i\leq n)(f(n)=0)$, and zero otherwise.  Note that $x$ is a rational number, namely $x=_{\R} \sum_{i=0}^{m_{0}}\frac{h(n)}{2^{n}}  $, where $m_{0}$ is the last $m$ such that $f(m)\ne0$.  However, we also have $x\ne_{\R} q$ for every standard rational, and this contradiction yields $\paai$.    

\medskip

Since $\paai$ has a normal form \eqref{frux} and $\textsf{DQ}_{\ns}$ has an obvious normal form, \eqref{unik} follows in the same way as in the second part of the proof of Theorem \ref{proto7}.
\end{proof}
Secondly, we discuss an apparent (but not actual) contradiction regarding $\textsf{DQ}_{\ns}$ and the previous theorem, as follows.
\begin{rem}[Equality in $\P$]\label{druj}\rm
First of all, $\P$ proves \ref{LEM2} via the law of excluded middle.  
Hence for a \emph{standard} real $x$ which does not satisfy the second disjunct of \ref{LEM2}, we may conclude the existence of a rational $q^{0}$ such that $x$ equals $q$.  
Secondly, in light of the first basic axiom of $\P$ (See item (\ref{komit}) of Definition~\ref{debs}), $q$ must be standard as $x$ is standard, and $x$ equals $q$.  
However, this means that $\RCAO$ proves $\textsf{DQ}_{\ns}$, which is impossible in light of Theorem \ref{krefje}.  
Thirdly, this apparent contradiction is easily explained by noting that `$=_{\R}$' as defined in $\P$ (See Definition \ref{keepinitreal}) does not fall under item (\ref{komit}) of Definition \ref{debs}.  
\end{rem}
Thirdly, while Theorem \ref{krefje} is not particularly deep, this theorem inspires Remark \ref{druj}, which in turn gives rise to the following observation:  
In $\RCAo$, either a real is rational or not because of the law of excluded middle \ref{LEM2}.  
By contrast, in $\RCAO$, there are three possibilities for every standard real:
\begin{enumerate}
\item $x$ is a standard rational;
\item $x$ is not a rational;  
\item $x$ is a rational, but not standard;\label{trunki}
\end{enumerate}
and the third possibility \eqref{trunki} only disappears given $\paai$ by Theorem \ref{krefje}.  Similarly, again over $\RCAO$, for a standard function $f^{1}$, there are three possibilities:
\begin{enumerate}
\renewcommand{\theenumi}{\roman{enumi}}
\item there is standard $n^{0}$ such that $f(n)=0$;\label{work}
\item for all $m^{0}$ we have $f(m)\ne0$;
\item for all standard $n^{0}$ we have $f(n)\ne0$ while there is $m^{0}$ with $f(m)= 0$;\label{kraft}
\end{enumerate}
and the third possibility again only disappears given $\paai$.  In particular, \emph{in the extended language of $\RCAO$}, $\paai$ (and not \ref{LEM2}) is the principle which excludes the third option \eqref{trunki} and \eqref{kraft}.  In other words, it seems that $\paai$ plays the role of the law of excluded middle/third \emph{in the extended language of $\RCAO$}.        

\medskip

Furthermore, it has been suggested that the predicate `$\st(n^{0})$' can be read as `$n^{0}$ is computationally relevant' or `$n^{0}$ is calculable' in \cite{brie}*{p.\ 1963}, \cite{benno2}*{\S4}, and \cite{sambon}*{\S3.4}. 
If we read the previous items \eqref{work}-\eqref{kraft} through this filter, they reflect three well-known possibilities suggested by the BHK interpretation (See \cite{troeleke1}*{\S3.1}): 
\begin{enumerate}
\renewcommand{\theenumi}{\Roman{enumi}}
\item we can compute $n^{0}$ s.t.\ $f(n)=0$ (constructive existence);\label{kracht}
\item for all $m^{0}$ we have $f(m)\ne0$;
\item $\neg[(\forall  m^{0})(f(m) \ne0)]$;\label{werk} 
\end{enumerate}
In conclusion, we have observed that the role of the law of excluded middle \emph{in the extended language of $\RCAO$} is played by Nelson's axiom \emph{Transfer}, which is however absent from $\RCAO$.  Due to this absence, there are \emph{three} possibilities as in items \eqref{work}-\eqref{kraft}, similar to the possibilities in items \eqref{kracht}-\eqref{werk} in constructive mathematics.  
In other words, the systems $\P$ and $\RCAO$ are constructive in that they lack \emph{Transfer}, which is the law of excluded middle \emph{for the extended language of internal set theory}.  
We believe this to be a partial explanation of the vast computational content of Nonstandard Analysis established above and in \cites{sambon, samGH, samzoo, samzooII}.

\section{A new class of functionals}\label{knowledge}
We discuss the results from \cite{dagsam}, some of which were announced in \cite{samGT2}, and related results. 
The associated connection between Nonstandard Analysis and computability theory forms the motivation for this paper, as discussed in Section \ref{intro}.  

\subsection{Nonstandard Analysis and Computability Theory: an introduction}\label{connexis}
The connection between \emph{computability theory} and \emph{Nonstandard Analysis} is investigated \cite{dagsam}.  The two following topics are investigated \emph{and} shown to be intimately related.  
\begin{enumerate}[label=\textsf{(T.\arabic*)}]
\item[\textsf{(T.1)}] A basic property of \emph{Cantor space} $2^{\N}$ is \emph{Heine-Borel compactness}: For any open cover of $2^{\N}$, there is a \emph{finite} sub-cover.   
A natural question is: \emph{How hard is it to {compute} such a finite sub-cover}?  This is made precise in \cite{dagsam} by analysing the complexity of functionals that for $g:2^{\N}\di \N$, 
output a finite sequence $\langle f_0 , \dots, f_n\rangle $ in $2^{\N}$ such that the neighbourhoods defined from $\overline{f_i}g(f_i)$ for $i\leq n$ form a cover of Cantor space.
\item[\textsf{(T.2)}] A basic property of Cantor space in \emph{Nonstandard Analysis} is Abraham Robinson's \emph{nonstandard compactness} (See \cite{loeb1}*{p.\ 42}), i.e.\ that every binary sequence is `infinitely close' to a \emph{standard} binary sequence.  The strength of this nonstandard compactness property of Cantor space is analysed in \cite{dagsam} and compared to the other axioms of Nonstandard Analysis and usual mathematics.
\end{enumerate}
The study of \textsf{(T.1)} in \cite{dagsam} involves the \emph{special fan functional} $\Theta$, discussed in Section \ref{thespecial} below and first introduced in \cite{samGH}.  
Clearly, Tait's \emph{fan functional} (\cite{noortje}) computes\footnote{Tait's fan functional $\Phi$ computes a modulus of \emph{uniform} continuity $N^{0}=\Phi(g)$ for any continuous functional $g:2^{\N}\di \N$.  The modulus $N^{0}$ yields a supremum for $g$ by computing the maximum of $g(\sigma*00\dots)$ for all binary sequences $\sigma$ of length $N$.} a sequence $\langle f_0 , \dots, f_n\rangle $ as in \textsf{(T.1)} for \emph{continuous} $g:2^{\N}\di \N$, while the special fan functional does so \emph{for any}  $g:2^{\N}\di \N$.  This generalisation from continuous to general inputs is interesting (and even necessary) in our opinion as mathematics restricted to e.g.\ only recursive objects, like the Russian school of recursive mathematics, can be strange and counter-intuitive (See \cite{beeson1}*{Chapter IV} for this opinion).  
Some of the (highly surprising) computational properties of $\Theta$ established in \cite{dagsam} are discussed in Section \ref{thespecial}.  In particular, $\Theta$ seems extremely hard to compute (as in Kleene's S1-S9 from \cite{longmann}*{\S5.1}) as no type two functional can compute it.  

\medskip

The study of \textsf{(T.2)} in \cite{dagsam} amounts to developing the Reverse Mathematics of Nonstandard Analysis.  For instance, the nonstandard counterparts of $\WKL_{0}$ and $\WWKL_{0}$ are $\STP$ and $\LMP$ (See Section \ref{thespecial} for definitions), 
each expressing a nonstandard kind of compactness.  On the other hand, the nonstandard counterpart of $\ACA_{0}$ is $\paai$ introduced above.  While we have $\ACA_{0}\di \WKL_{0}\di \WWKL_{0}$ in RM, the nonstandard counterparts behave quite differently, namely 
we have $\paai \not\di \STP$ and $\paai \not\di\LMP$, and much stronger non-implications involving the nonstandard counterpart of $\FIVE$, the strongest `Big Five' system.  

\medskip

We stress that \textsf{(T.1)} and \textsf{(T.2)} are highly intertwined and that the study of these topics in \cite{dagsam} is `holistic' in nature: 
results in computability theory give rise to results in Nonstandard Analysis \emph{and vice versa}, as discussed in the next section.   By way of a basic example, consider $\Theta$ as in \textsf{(T.1)} and recall that the output of $\Theta$ is readily computed in terms of Tait's fan functional if $g^{2}$ is continuous on Cantor space.  Experience bears out that the uninitiated express extreme skepticism about the fact that $\Theta$ is also well-defined for \emph{discontinuous} inputs $g^{2}$.  However, $\Theta$ almost trivially emerges form the nonstandard compactness of Cantor space, i.e.\ Nonstandard Analysis tells us that the special fan functional $\Theta$ exists and is well-defined.  Furthermore, the fact that the Turing jump functional from $(\exists^{2})$ cannot compute $\Theta$ as mentioned in \textsf{(T.1)} readily implies the non-implication $\paai\not \di \STP$ from \textsf{(T.2)}.    
More examples are discussed in the next section, and of course \cite{dagsam}.  

\medskip

Finally, we have sketched a connection between Nonstandard Analysis and computability theory.  However, the better part of the latter does not obviously have a counterpart in the former and vice versa.  
We list two examples:  First of all, the fact that no type two functional computes $\Theta$ is proved in computability theory (See \cite{dagsam}*{\S3}) using \emph{Gandy selection} (See \cite{longmann}*{Theorem 5.4.5}), but what is the nonstandard counterpart of the latter theorem?  
Secondly, the \emph{Loeb measure} is one of the crown jewels of Nonstandard Analysis (\cite{loeb1}), but what is the computability theoretic counterpart of this measure?  Note that first steps in this direction have been taken in \cite{samnewarix}.  
In the paper at hand, we have formulated a nonstandard counterpart of \emph{Grilliot's trick} inspired by the above connection between Nonstandard Analysis and computability theory.  

\subsection{The special fan functional and related topics}\label{thespecial}
We introduce the \emph{special fan functional} and discuss how it derives from the \emph{Standard Part} axiom of Nonstandard Analysis and why it does not belong to any existing category in RM.  

\medskip

Our motivation for this study is the following discrepancy:  On one hand, there is literally a `zoo' of theorems in RM (\cite{damirzoo}) which do fit into the `Big Five classification' of RM.  On the other hand, as shown above and in \cites{samzoo, samzooII}, uniform theorems are mostly equivalent to $(\exists^{2})$, with some exceptions based on the contraposition of $\WKL$, i.e.\ the fan theorem.  Thus, the `higher-order RM zoo' consisting of uniform theorems is still rather sparse compared to the original RM zoo.      
In this light, it is a natural question whether the higher-order RM zoo can be made as populous as the original RM zoo.  

\medskip

As a first step towards an answer to the aforementioned question, we discuss the following functional.  Note that $1^{*}$ is the type of finite sequences of type $1$.
\bdefi[Special fan functional]\label{special}
We define $\SCF(\Theta)$ as follows for $\Theta^{(2\di (0\times 1^{*}))}$:
\[
(\forall g^{2}, T^{1}\leq_{1}1)\big[(\forall \alpha \in \Theta(g)(2))  (\overline{\alpha}g(\alpha)\not\in T)
\di(\forall \beta\leq_{1}1)(\exists i\leq \Theta(g)(1))(\overline{\beta}i\not\in T) \big]. 
\]
Any functional $\Theta$ satisfying $\SCF(\Theta)$ is referred to as a \emph{special fan functional}.
\edefi
As noted in \cite{dagsam} and above, from a computability theoretic perspective, the main property of the special fan functional $\Theta$ is the selection of $\Theta(g)(2)$ as a finite sequence of binary sequences $\langle f_0 , \dots, f_n\rangle $ such that the neighbourhoods defined from $\overline{f_i}g(f_i)$ for $i\leq n$ form a cover of Cantor space;  almost as a by-product, $\Theta(g)(1)$ can then be chosen to be the maximal value of $g(f_i) + 1$ for $i\leq n$. 
We stress that $g^{2}$ in $\SCF(\Theta)$ may be \emph{discontinuous} and that Kohlenbach has argued for the study of discontinuous functionals in RM (\cite{kohlenbach2}).

\medskip

The name of $\Theta$ from the previous definition is due to the fact that a special fan functional may be computed from the intuitionistic fan functional $\Omega^{3}$, as in Theorem \ref{kinkel}.
\be\tag{$\MUC(\Omega)$}
(\forall Y^{2}) (\forall f, g\leq_{1}1)(\overline{f}\Omega(Y)=\overline{g}\Omega(Y)\notag \di Y(f)=Y(g)).   
\ee
As to the logical strength of $(\exists \Omega)\MUC(\Omega)$, the latter gives 
rise to a conservative extension of the system $\WKL_{0}$ by \cite{kohlenbach2}*{Prop.\ 3.15}.  
\begin{thm}\label{kinkel}
There is a term $t$ such that $\textsf{\textup{E-PA}}^{\omega}$ proves $(\forall \Omega^{3})(\MUC(\Omega)\di \SCF(t(\Omega))) $.   
\end{thm}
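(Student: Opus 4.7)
The plan is to exhibit a primitive recursive term $t$ such that, for any $\Omega^{3}$ satisfying $\MUC(\Omega)$, the functional $\Theta:=t(\Omega)$ satisfies $\SCF(\Theta)$. The key property used is that $\Omega(g)$ is a modulus of uniform continuity for $g^{2}$ on Cantor space: if $f,h\leq_{1}1$ agree on their first $\Omega(g)$ bits, then $g(f)=g(h)$.

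First, given $g^{2}$, I would set $N_{0}:=\Omega(g)$. For each binary string $\sigma$ of length $N_{0}$, let $h_{\sigma}$ denote its extension to an infinite binary sequence by zeros (i.e.\ $h_{\sigma}(n)=\sigma(n)$ for $n<N_{0}$ and $h_{\sigma}(n)=0$ otherwise), and put $m_{\sigma}:=g(h_{\sigma})$. Let $M:=\max(\{N_{0}\}\cup\{m_{\sigma}:\sigma\in 2^{N_{0}}\})$. Define $\Theta(g)(1):=M$, and let $\Theta(g)(2)$ be the finite sequence enumerating all objects of the form $h_{\tau}$ with $\tau\in 2^{M}$. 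Each ingredient---applying $\Omega$, iterating over $2^{N_{0}}$ or $2^{M}$, forming zero-extensions, and taking maxima---is uniformly primitive recursive in $\Omega$, so the whole assignment $g\mapsto\Theta(g)$ is coded by a closed term $t$ of $\epa$.

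To verify $\SCF(\Theta)$ in $\epa$, fix a binary tree $T\leq_{1}1$ and assume $\overline{\alpha}g(\alpha)\notin T$ for every $\alpha\in\Theta(g)(2)$. Given arbitrary $\beta\leq_{1}1$, set $\tau:=\overline{\beta}M$ and $\sigma:=\overline{\beta}N_{0}$, so that $\alpha:=h_{\tau}$ belongs to $\Theta(g)(2)$. Since $\overline{\alpha}N_{0}=\sigma=\overline{\beta}N_{0}$, $\MUC(\Omega)$ yields $g(\alpha)=g(\beta)=m_{\sigma}\leq M$. Because $g(\alpha)\leq M=|\tau|$ and $\overline{\alpha}M=\tau=\overline{\beta}M$, we get $\overline{\beta}g(\alpha)=\overline{\tau}g(\alpha)=\overline{\alpha}g(\alpha)$, and this string is not in $T$ by assumption. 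Setting $i:=g(\alpha)\leq \Theta(g)(1)$ therefore yields $\overline{\beta}i\notin T$, which is exactly the conclusion required in $\SCF(\Theta)$.

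The subtle point I expect to be the main obstacle is the case $m_{\sigma}>N_{0}$: the naive attempt to cover Cantor space merely by the neighbourhoods around $\{h_{\sigma}:\sigma\in 2^{N_{0}}\}$ fails, since $\overline{h_{\sigma}}m_{\sigma}$ and $\overline{\beta}m_{\sigma}$ can disagree at positions $N_{0},\ldots,m_{\sigma}-1$ for those $\beta$ that differ from $h_{\sigma}$ past the $N_{0}$-th bit. The remedy built into the construction above is to pad: one enlarges the witness set to all $h_{\tau}$ with $|\tau|=M$, ensuring that an arbitrary $\beta$ is matched bit-for-bit up to position $M\geq g(\beta)$ by one of the $\alpha\in\Theta(g)(2)$, which is what makes the verification go through.
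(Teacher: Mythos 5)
Your proof is correct and takes essentially the same route as the paper: you set $\Theta(g)(1)$ to the max of $g$ over the zero-extensions of all length-$\Omega(g)$ strings, let $\Theta(g)(2)$ be the zero-extensions of all strings of length $\Theta(g)(1)$, and verify the covering property by matching an arbitrary $\beta$ against $h_{\overline{\beta}M}$. The only cosmetic difference is that you also include $N_{0}=\Omega(g)$ in the max defining $M$; the paper omits this because its verification never needs $M\geq N_{0}$ (it bounds $g(\beta_{0})\leq M$ directly rather than arguing $g(\alpha)=g(\beta)$).
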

\begin{proof}
The theorem was first proved \emph{indirectly} in \cite{samGH}*{\S3} by applying Theorem~\ref{consresult} to a suitable nonstandard implication.    
For completeness, we include the following direct proof which can also be found in \cite{dagsam}. 
Note that $\Theta(g)$ as in $\SCF(\Theta)$ has to provide a natural number and a finite sequence of binary sequences.
The number $\Theta(g)(1)$ is defined as $\max_{|\sigma|=\Omega(g)\wedge \sigma\leq_{0^{*}}1}g(\sigma*00\dots)$ and the finite sequence of binary sequences $\Theta(g)(2)$
consists of all $\tau*00\dots$ where $|\tau|=\Theta(g)(1)\wedge \tau\leq_{0^{*}}1$.  
We have for all $g^{2}$ and $T^{1}\leq_{1}1$:
\be\label{difffff}
 (\forall \beta\leq_{1}1)(\beta \in \Theta(g)(2)\di \overline{\beta}{g}(\beta)\not \in T)\di (\forall \gamma\leq_{1}1)(\exists i\leq \Theta(g)(1))(\overline{\gamma}i\not \in T).
\ee
Indeed, suppose the antecendent of \eqref{difffff} holds.  Now take $\gamma_{0}\leq_{1}1$, and note that $\beta_{0}=\overline{\gamma_{0}}\Theta(g)(1)*00\dots \in \Theta(g)(2)$, implying 
$\overline{\beta_{0}}{g}(\beta_{0})\not \in T$.  But $g(\alpha)\leq \Theta(g)(1)$ for all $\alpha\leq_{1}1$, by the definition of $\Omega$, implying that $\overline{\gamma_{0}}{g}(\beta_{0})=\overline{\beta_{0}}{g}(\beta_{0})\not \in T$ by the definition of $\beta_{0}$, and the consequent of \eqref{difffff} follows.    
\end{proof}
In light of the previous, $\Theta$ exists at the level of $\WKL_{0}$ and it therefore stands to reason that it would be \emph{easy} to compute.  
We have the following surprising theorem where `computable' should be once again interpreted in the sense of Kleene's S1-S9 (See \cite{longmann}*{5.1.1}).
The metatheory is -as always- $\ZFC$ set theory.  
\begin{thm}\label{import}
Let $\varphi^{2}$ be any functional of type two.  
Any functional $\Theta^{3}$ as in $\SCF(\Theta)$ is not computable in $\varphi$.   
Any functional $\Theta^{3}$ as in $\SCF(\Theta)$ is computable in $(\exists^{3})$ as follows
\be\tag{$\exists^{3}$}
(\exists E_{3})(\forall \varphi^{2})\big[ (\exists f^{1})(\varphi(f)=0)\asa E_{3}(\varphi)=0   \big].
\ee
\end{thm}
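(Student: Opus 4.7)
The plan is to split the theorem into its two assertions, treating the non-computability claim as the substantial part and the computability-in-$(\exists^{3})$ claim as largely routine.

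For the non-computability claim, I would argue by contradiction: fix a putative Kleene S1--S9 index $e$ witnessing that some $\Theta$ satisfying $\SCF(\Theta)$ is computable in a type-two functional $\varphi$, and construct a $g_{0}:2^{\N}\to\N$ on which $\{e\}^{\varphi}$ cannot return a valid special-fan output. The main tools are the Kleene recursion theorem for higher-type computations (to allow $g_{0}$ to be defined with reference to $\{e\}^{\varphi}(g_{0})$ itself) together with \emph{Gandy selection} (\cite{longmann}*{Theorem~5.4.5}), which provides a $\varphi$-computable selector for $\Sigma^{0,\varphi}_{1}$-predicates. Using these, define $g_{0}$ so that its values on the purported cover $\{e\}^{\varphi}(g_{0})(2)=\langle f_{0},\dots,f_{n}\rangle$ strictly exceed $\{e\}^{\varphi}(g_{0})(1)$: for a Gandy-selected $f_{j}$ among the members of the cover, set $g_{0}(f_{j}) > \{e\}^{\varphi}(g_{0})(1)$, so that the neighbourhood $[\overline{f_{j}}g_{0}(f_{j})]$ sits strictly below the claimed uniform bound. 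Testing against the canonical tree $T := 2^{<\N}\setminus\{\sigma : (\exists i\le n)(\overline{f_{i}}g_{0}(f_{i})\sqsubseteq \sigma)\}$, the consequent of $\SCF(\Theta)$ would require some $i\le \Theta(g_{0})(1)$ with $\overline{f_{j}}i\notin T$, which is impossible since the only way for $\overline{f_{j}}i$ to leave $T$ is $i\ge g_{0}(f_{j}) > \Theta(g_{0})(1)$.

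For the computability-in-$(\exists^{3})$ claim, observe that the predicate ``the tuple $\langle f_{0},\dots,f_{n}\rangle$ of binary sequences gives a cover $\bigcup_{i\le n}[\overline{f_{i}}g(f_{i})] = 2^{\N}$'' has the shape $(\forall \beta\le_{1}1)(\exists i\le n)(\overline{f_{i}}g(f_{i})\sqsubseteq\beta)$, which after the single type-one universal reduces to a decidable clause in $g$ and is therefore decidable from $(\exists^{3})$. A primitive-recursive enumeration of finite tuples from $2^{<\N}$, combined with $(\exists^{3})$ to test each candidate tuple, produces a covering tuple in finite (S1--S9) time; existence of such a tuple is guaranteed by the argument of Theorem~\ref{kinkel}, i.e.\ by Heine-Borel compactness applied to the open sets $[\overline{f}g(f)]$. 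One then outputs this tuple as $\Theta(g)(2)$ and $\max_{i\le n}g(f_{i})+1$ as $\Theta(g)(1)$, and $\SCF(\Theta)$ holds by construction.

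The main obstacle lies in Part~1, specifically in executing the Gandy-selection-plus-recursion construction so that $g_{0}$ is a genuine total type-two functional and so that $\{e\}^{\varphi}(g_{0})$ is actually forced to converge (otherwise the index $e$ trivially fails on $g_{0}$ and we learn nothing about $\Theta$). The balance is delicate: the diagonalisation must be weak enough to keep $g_{0}$ itself $\varphi$-computable (so that $\{e\}^{\varphi}(g_{0})$ is meaningful) yet strong enough to guarantee the required strict inequality $g_{0}(f_{j}) > \{e\}^{\varphi}(g_{0})(1)$. Gandy selection delivers precisely the amount of uniform choice needed for this balance to work inside the $\varphi$-computable world.
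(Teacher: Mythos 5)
Your proposal aims at a result whose proof the paper itself defers entirely to the companion work \cite{dagsam}: the body of the proof in this paper is the single line ``A proof may be found in \cite{dagsam}.'' What the paper does say elsewhere (end of Section~\ref{connexis}) is that the non-computability half uses Gandy selection, so your high-level choice of tool matches the cited argument. That said, the execution of both halves as you present them has concrete gaps.

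For the non-computability half, the closing contradiction does not go through as stated. You define $T := 2^{<\N}\setminus\{\sigma : (\exists i\le n)(\overline{f_{i}}g_{0}(f_{i})\sqsubseteq \sigma)\}$ and assert that ``the only way for $\overline{f_{j}}i$ to leave $T$ is $i\ge g_{0}(f_{j})$.'' But $\overline{f_{j}}i\not\in T$ also happens whenever some \emph{other} index $k\ne j$ satisfies $\overline{f_{k}}g_{0}(f_{k})\sqsubseteq \overline{f_{j}}i$, i.e.\ whenever $f_{j}\in[\overline{f_{k}}g_{0}(f_{k})]$ and $i\ge g_{0}(f_{k})$. Since nothing in $\SCF(\Theta)$ forces the output tuple $\langle f_{0},\dots,f_{n}\rangle$ to give a \emph{disjoint} or \emph{irredundant} cover, this alternative escape route is live, and the consequent of $\SCF(\Theta)$ can be satisfied at $\beta=f_{j}$ by a small $i$. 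So inflating $g_{0}(f_{j})$ alone gives no contradiction; you would have to control all the $g_{0}(f_{k})$ simultaneously, which makes the circularity/totality worry you flag genuinely fatal rather than merely delicate. Relatedly, the invocation of Gandy selection to ``pick $f_{j}$ from the cover'' is redundant (the cover is a finite tuple, one just takes a coordinate); in the cited argument Gandy selection does real work elsewhere, so its appearance here suggests the reconstruction is off-track.

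The computability-in-$(\exists^{3})$ half also has a gap: a ``primitive-recursive enumeration of finite tuples from $2^{<\N}$'' only ever produces tuples of eventually-zero sequences, and for discontinuous $g$ there need not be any finite subcover of $2^{\N}$ by neighbourhoods $[\overline{f}g(f)]$ anchored at eventually-zero $f$. (Take $g$ to return a value strictly larger than the position of the last~$1$ on eventually-zero inputs and~$0$ on all others; then $1^{\omega}$ lies in no $[\overline{f}g(f)]$ with $f$ eventually zero, even though $[\overline{(1^{\omega})}g(1^{\omega})]=2^{\N}$.) Appealing to Theorem~\ref{kinkel} does not help here, since that argument presupposes $\MUC(\Omega)$, i.e.\ uniform continuity of all type-two functionals. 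What is actually needed is to use $(\exists^{3})$ not only to \emph{test} candidate covers but to \emph{select} genuine reals $f_{i}\in 2^{\N}$, e.g.\ by computing lexicographically least witnesses for the predicate ``$\exists f\,(\overline{f}g(f)\sqsubseteq\sigma)$'' bit-by-bit; without some such selection step the search loops forever on bad $g$.
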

\begin{proof}
A proof may be found in \cite{dagsam}.
\end{proof}
By the previous, $\Theta$ is quite different from the usual\footnote{Note that the connection between $\Theta$ and Kohlenbach's generalisations of $\WKL$ from \cite{kohlenbach4}*{\S5-6} is discussed in \cite{dagsam}*{\S4}.  This connection turns out to be quite non-trivial and interesting.} objects studied in (higher-order) RM.  
An obvious question is: Where does the special fan functional and its behaviour come from?  
The answer is as follows: The nonstandard counterpart of $\WKL_{0}$ is defined as:
\be\tag{$\STP$}
(\forall \alpha^{1}\leq_{1}1)(\exists^{\st}\beta^{1}\leq_{1}1)(\alpha\approx_{1}\beta), 
\ee  
which has the following normal form, already reminiscent of $\Theta$.  
\begin{thm}\label{lapdog}
In $\P$, $\STP$ is equivalent to the following normal form:
\begin{align}\label{frukkklk}
(\forall^{\st}g^{2})(\exists^{\st}w^{1^{*}})\big[(\forall T^{1}\leq_{1}1)(\exists ( \alpha^{1}\leq_{1}1,  &~k^{0}) \in w)\big((\overline{\alpha}g(\alpha)\not\in T)\\
&\di(\forall \beta\leq_{1}1)(\exists i\leq k)(\overline{\beta}i\not\in T) \big)\big]. \notag
\end{align}  
The system $\P+(\exists^{\st}\Theta)\SCF(\Theta)$ proves $\STP$.  
\end{thm}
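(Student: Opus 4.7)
The plan is to prove the equivalence $\STP \leftrightarrow \eqref{frukkklk}$ inside $\P$; the second assertion then follows at once, since any $\Theta$ witnessing $(\exists^{\st}\Theta)\SCF(\Theta)$ supplies, for each standard $g^{2}$, the standard sequence $w := \langle(\alpha_{1}, K),\ldots,(\alpha_{n}, K)\rangle$ with $K := \Theta(g)(1)$ and $\langle\alpha_{1},\ldots,\alpha_{n}\rangle := \Theta(g)(2)$, which directly witnesses \eqref{frukkklk}, from which $\STP$ will drop out via the backward direction.

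For $\STP \to \eqref{frukkklk}$, fix a standard $g^{2}$ and first establish the intermediate statement
\[
(\exists^{\st} \vec\beta^{1^{*}})(\forall \gamma \leq_{1} 1)(\exists \beta \in \vec\beta)\bigl(\overline{\gamma}g(\beta) = \overline{\beta}g(\beta)\bigr).
\]
If it fails, contraposition yields $(\forall^{\st}\vec\beta)(\exists \gamma \leq_{1} 1)(\forall \beta \in \vec\beta)(\overline{\gamma}g(\beta) \neq \overline{\beta}g(\beta))$; absorbing the bound $\leq_{1}1$ into the internal matrix and applying \I produces a single $\gamma \leq_{1} 1$ with $(\forall^{\st}\beta)(\overline{\gamma}g(\beta) \neq \overline{\beta}g(\beta))$. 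By $\STP$ there is a standard $\beta_{0}$ with $\gamma \approx_{1} \beta_{0}$; since $g(\beta_{0})$ is then standard one gets $\overline{\gamma}g(\beta_{0}) = \overline{\beta_{0}}g(\beta_{0})$, a contradiction. With $\vec\beta = \langle \beta_{1},\ldots,\beta_{n}\rangle$ in hand, put $K := \max_{j\leq n} g(\beta_{j})$ and $w := \langle(\beta_{1},K),\ldots,(\beta_{n},K)\rangle$. For any $T \leq_{1}1$, if some $\overline{\beta_{j}}g(\beta_{j}) \in T$ the implication in \eqref{frukkklk} is vacuous at that $j$; otherwise any hypothetical $\delta \leq_{1} 1$ with $(\forall i \leq K)\,\overline{\delta}i \in T$ would, by the intermediate statement applied with $\gamma = \delta$, force $\overline{\beta_{j}}g(\beta_{j}) = \overline{\delta}g(\beta_{j}) \in T$ for some $j$, a contradiction, so $T$ is bounded by $K$.

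For the converse $\eqref{frukkklk} \to \STP$, fix $\alpha \leq_{1} 1$ and suppose toward contradiction that $(\forall^{\st}\beta \leq_{1} 1)(\exists^{\st}n)(\overline{\alpha}n \neq \overline{\beta}n)$. Applying $\HAC_{\INT}$ (with $\alpha$ as an internal parameter) and replacing the Herbrand witness $F$ by $g(\beta) := \max F(\beta) + 1$ gives a standard $g^{2}$ with $(\forall^{\st}\beta)(\overline{\alpha}g(\beta) \neq \overline{\beta}g(\beta))$. Instantiating \eqref{frukkklk} at $g$ supplies a standard $w = \langle(\beta_{1},k_{1}),\ldots,(\beta_{n},k_{n})\rangle$; apply its inner formula at the internal binary tree $T_{\alpha} := \{\sigma : \sigma = \overline{\alpha}|\sigma|\}$ of initial segments of $\alpha$. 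Since $\overline{\alpha}m \in T_{\alpha}$ for every $m$, the bounding consequent fails at every $(\beta_{j},k_{j})$, so the antecedent must fail at some $j$; this yields $\overline{\beta_{j}}g(\beta_{j}) = \overline{\alpha}g(\beta_{j})$ with $\beta_{j}$ standard (as a component of the standard sequence $w$), contradicting the previous display at $\beta = \beta_{j}$.

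The main technical obstacle will be the application of $\HAC_{\INT}$ in the converse with the non-standard $\alpha$ appearing as a parameter of the internal matrix: this is licit because the $\HAC_{\INT}$ schema of Section~\ref{PIPI} is stated for arbitrary internal $\varphi$ with no restriction on its free variables, but it is the hinge of the backward direction and deserves explicit mention. The accompanying delicate syntactic point is the correct application of \I in the forward direction, with the boundedness $\leq_{1}1$ threaded into the internal matrix so as to fit the shape of the idealisation schema.
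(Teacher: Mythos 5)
Your proposal is correct, and it takes a genuinely different route from the paper. The paper factors both directions of $\STP\leftrightarrow\eqref{frukkklk}$ through the auxiliary nonstandard fan-theorem form
\[
(\forall T^{1}\leq_{1}1)\big[(\forall^{\st}n)(\exists \beta^{0^{*}})(|\beta|=n \wedge \beta\in T ) \rightarrow (\exists^{\st}\alpha^{1}\leq_{1}1)(\forall^{\st}n^{0})(\overline{\alpha}n\in T)\big],
\]
whose equivalence with $\STP$ it merely cites; the two remaining implications are then carried out by quantifier manipulation (contrapositions, pushing standard quantifiers in and out, one application of $\HAC_{\INT}$ and one of \textsf{I}). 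Your argument avoids this detour entirely: the forward direction uses idealisation and $\STP$ to extract a standard finite ``net'' $\vec\beta$ such that the $g$-neighbourhoods of its members cover Cantor space (this is precisely the covering content of the auxiliary form, but re-derived on the spot), and $w$ is read off from $\vec\beta$; the backward direction is a new construction, turning a putative counterexample $\alpha$ to $\STP$ into a standard $g$ via $\HAC_{\INT}$ and then hitting the witness $w$ for this $g$ with the internal tree $T_\alpha$ of initial segments of $\alpha$. What the paper's route buys is a clean modular split into $\STP\leftrightarrow\eqref{fanns}$ and $\eqref{fanns}\leftrightarrow\eqref{frukkklk}$, reusing a known lemma; what yours buys is a self-contained and more concrete proof, with the covering picture (the one made explicit in the discussion of $\textsf{(T.1)}$) visible in the combinatorics rather than hidden in a quantifier chase. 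Both treat the final assertion identically: $\Theta$ supplies the standard witness $w$ and $\STP$ follows from the proved equivalence. Two minor points you should make explicit when polishing: the $\vec\beta$ in your intermediate statement should range over sequences of \emph{binary} type-$1$ objects so that the $w$ you build satisfies the $\leq_{1}1$ constraints in $\eqref{frukkklk}$; and the application of $\HAC_{\INT}$ in the converse needs the small reformulation $(\forall^{\st}\beta)(\exists^{\st}n)(\beta\leq_{1}1\rightarrow\overline{\alpha}n\neq\overline{\beta}n)$ (or truncation of $\beta$) to match the schema's shape, parallel to the bound-absorption you already flag for \textsf{I}.
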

\begin{proof}  
The following proof is implicit in the results in \cite{samGH}*{\S3} and is added for completeness.  
First of all, $\STP$ is easily seen to be equivalent to 
\begin{align}\label{fanns}
(\forall T^{1}\leq_{1}1)\big[(\forall^{\st}n)(\exists \beta^{0})&(|\beta|=n \wedge \beta\in T ) \di (\exists^{\st}\alpha^{1}\leq_{1}1)(\forall^{\st}n^{0})(\overline{\alpha}n\in T)   \big],
\end{align}
and this equivalence may also be found in \cite{samGH}*{Theorem 3.2}.
For \eqref{frukkklk}$\di$\eqref{fanns}, note that \eqref{frukkklk} implies for all standard $g^{2}$
\begin{align}\label{frukkklk2}
(\forall T^{1}\leq_{1}1)(\exists^{\st} ( \alpha^{1}\leq_{1}1,  &~k^{0})\big[(\overline{\alpha}g(\alpha)\not\in T)
\di(\forall \beta\leq_{1}1)(\exists i\leq k)(\overline{\beta}i\not\in T) \big], 
\end{align}  
which in turn yields, by bringing all standard quantifiers inside again, that:
\begin{align}\label{frukkklk3}
(\forall T\leq_{1}1) \big[(\exists^{\st}g^{2})(\forall^{\st}\alpha \leq_{1}1)(\overline{\alpha}g(\alpha)\not\in T)\di(\exists^{\st}k)(\forall \beta\leq_{1}1)(\overline{\beta}k\not\in T) \big], 
\end{align}  
To obtain \eqref{fanns} from \eqref{frukkklk3}, apply $\HACint$ to $(\forall^{\st}\alpha^{1}\leq_{1}1)(\exists^{\st}n)(\overline{\alpha}n\not\in T)$ to obtain standard $\Psi^{1\di 0^{*}}$ such that  
$(\forall^{\st}\alpha^{1}\leq_{1}1)(\exists n\in \Psi(\alpha))(\overline{\alpha}n\not\in T)$, and defining $g(\alpha):=\max_{i<|\Psi|}\Psi(\alpha)(i)$ we obtain $g$ as in the antecedent of \eqref{frukkklk3}.  The previous implies 
\be\label{gundark}
(\forall T^{1}\leq_{1}1) \big[(\forall^{\st}\alpha^{1}\leq_{1}1)(\exists^{\st}n)(\overline{\alpha}n\not\in T)\di (\exists^{\st}k)(\forall \beta\leq_{1}1)(\overline{\beta}i\not\in T) \big], 
\ee
which is the contraposition of \eqref{fanns}, using classical logic.  For the implication $\eqref{fanns}  \di \eqref{frukkklk}$, consider the contraposition of \eqref{fanns}, i.e.\ \eqref{gundark}, and note that the latter implies \eqref{frukkklk3}.  Now push all standard quantifiers outside as follows:
\[
(\forall^{\st}g^{2})(\forall T^{1}\leq_{1}1)(\exists^{\st} ( \alpha^{1}\leq_{1}1, ~k^{0})\big[(\overline{\alpha}g(\alpha)\not\in T)
\di(\forall \beta\leq_{1}1)(\exists i\leq k)(\overline{\beta}i\not\in T) \big], 
\]
and applying idealisation \textsf{I} yields \eqref{frukkklk}.  The final part now follows immediately in light of the basic axioms of $\P$ in Definition \ref{debs}.  
\end{proof}
By the previous theorem $\Theta$ emerges from Nonstandard Analysis, and the behaviour of $\Theta$ as in Theorem \ref{import} can be explained similarly: 
It is part of the folklore of Nonstandard Analysis that \emph{Transfer} does not imply \emph{Standard Part}.   The same apparently holds for fragments: $\P+\paai$ does not prove $\STP$ by the results in \cite{dagsam}*{\S6}.      
As a result of applying Theorem \ref{consresult}, there is no term of G\"odel's $T$ which computes $\Theta$ in terms of $(\mu^{2})$.  A stronger result as in Theorem \ref{import} apparently can be obtained.    

\medskip

Next, we discuss a nonstandard version of $\WWKL$,  introduced in \cite{pimpson}, as follows:
\be\tag{$\LMP$}
(\forall   T \leq_{1}1)\big[  \nu(T)\gg 0 \di (\exists^{\st} \beta\leq_{1}1)(\forall^{\st}n)(\overline{\beta}n\in T)\big], 
\ee
and one obtains a normal form of $\LMP$ similar to \eqref{frukkklk}.  As for the latter, this normal form gives rise to a \emph{weak fan functional} $\Lambda$, first introduced in \cite{dagsam}*{\S3}.  
We have the following theorem where $\ATR_{0}$ is the fourth `Big Five' system of RM (See \cite{simpson2}*{V}).  
\begin{thm}
The system $\P+\paai+\STP$ proves the consistency of $\ATR_{0}$ while $\P+\paai+\LMP$ does not.  
\end{thm}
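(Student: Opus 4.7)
The plan is to apply the term extraction machinery of Corollary \ref{consresultcor2} to both halves of the theorem and then invoke the computability-theoretic classification of the fan-like functionals that emerge.

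For the positive half, I would work inside $\P+\paai+\STP$ and combine three ingredients: Theorem \ref{lapdog} (the normal form \eqref{frukkklk} of $\STP$), the normal form \eqref{frux} of $\paai$, and the derivation of a standard Feferman mu-operator as in \eqref{karmic} of the proof of Theorem \ref{proto7}. Applying Corollary \ref{consresultcor2} to this combination yields closed terms of G\"odel's $T$ producing, in the internal system $\textsf{E-PA}^{\omega*}$, both Feferman's search functional $\mu^{2}$ and a special fan functional $\Theta$ satisfying $\SCF(\Theta)$. Appealing to the results of \cite{dagsam}, the pair $(\Theta,\mu^{2})$ has sufficient Kleene S1--S9 strength to construct an $\omega$-model of $\ATR$ internally in $\textsf{E-PA}^{\omega*}$ extended by these functionals; arithmetising this model construction inside $\P+\paai+\STP$ then establishes $\text{Con}(\ATR_{0})$ as an \emph{internal} arithmetical theorem, which by conservation descends to $\P+\paai+\STP$ itself.

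For the negative half, the crucial point is that $\LMP$ does not yield $\Theta$ but only the strictly weaker \emph{weak fan functional} $\Lambda$, obtained from a normal form for $\LMP$ analogous to \eqref{frukkklk} but restricted to binary trees of positive measure. By the computability-theoretic analysis carried out in \cite{dagsam}*{\S3}, the pair $(\Lambda,\mu^{2})$ can be realised inside a classical model whose arithmetical fragment is conservative over a subsystem of second-order arithmetic strictly below $\ATR_{0}$ (for instance a suitable extension of $\WWKL$ by $\ACA$). Combining this with the term-extraction-based conservation provided by Corollary \ref{consresultcor2} exhibits a model of $\P+\paai+\LMP$ whose arithmetical consequences are contained in such a subsystem; since that subsystem does not prove $\text{Con}(\ATR_{0})$, neither does $\P+\paai+\LMP$.

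The main obstacle lies in the two strength bounds on $(\Theta,\mu^{2})$ and $(\Lambda,\mu^{2})$: one must show the former reaches beyond $\ATR_{0}$ while the latter falls short of it. Both rest on delicate computability-theoretic arguments (Gandy-style selection for the $\Theta$ side, and measure-theoretic realisability for the $\Lambda$ side), and in the present paper these are cited from \cite{dagsam} rather than reproved. The only work genuinely local to this proof is the careful conversion of the nonstandard axioms $\STP$ and $\LMP$ into normal forms admitting term extraction, and the verification that the resulting terms indeed witness $\SCF$ and its weak-measure analogue.
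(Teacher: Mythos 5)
Your proposal for the positive half contains a genuine conceptual error about what the term extraction machinery does. Corollary \ref{consresultcor2} does \emph{not} produce the functionals $\mu^{2}$ and $\Theta$ as closed terms of G\"odel's $T$ inside $\textup{\textsf{E-PA}}^{\omega*}$; indeed no $T$-term can denote a discontinuous functional like Feferman's $\mu$, so $\textup{\textsf{E-PA}}^{\omega*}$ cannot prove $(\exists\mu)\MU(\mu)$ at all. What term extraction actually delivers is a \emph{conservativity} statement in one fixed direction: if $\P$ plus nonstandard axioms proves a normal form, then $\textup{\textsf{E-PA}}^{\omega*}$ \emph{extended by} the corresponding internal functional existence axioms (here $(\mu^{2})$ and $(\exists\Theta)\SCF(\Theta)$) proves its witnessed version. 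You then propose to prove $\textup{Con}(\ATR_{0})$ in the enriched internal system and have it ``by conservation descend to $\P+\paai+\STP$''---but that is exactly the reverse of the direction in which the conservativity runs, and no such descent is available. To establish the positive half one must argue \emph{directly inside} $\P+\paai+\STP$, for instance by showing that $\paai$ and $\STP$ jointly give a relativised transfer principle strong enough to build, internally and in the `st'-world, a model of $\ATR_{0}$, and then internalise consistency from there; this is what the cited \S6 of the reference carries out. A further, subtler obstruction to your route: even granting the strength bound on $(\Theta,\mu^{2})$ from the reference, knowing $\textup{\textsf{E-PA}}^{\omega*}+(\mu^{2})+(\exists\Theta)\SCF(\Theta)\vdash\textup{Con}(\ATR_{0})$ does not transfer back, because inside $\P+\paai+\STP$ one only obtains the \emph{standard-relativised} statements $\MU(\mu)^{\st}$ and $\SCF(\Theta)^{\st}$, and $\paai$ (being merely $\Pi^{0}_{1}$-Transfer) cannot upgrade these to the unrelativised $\MU(\mu)$ and $\SCF(\Theta)$ needed to invoke the internal theorem.

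Your sketch of the negative half is more faithful: there the term extraction machinery \emph{is} used in its natural direction, showing $\P+\paai+\LMP$ conservative (for normal forms, in particular for arithmetical $\Pi^{0}_{1}$-sentences such as $\textup{Con}(\ATR_{0})$) over $\textup{\textsf{E-PA}}^{\omega*}$ plus $(\mu^{2})$ and the weak fan functional $\Lambda$, after which a model-theoretic bound on the strength of $(\Lambda,\mu^{2})$ yields unprovability of $\textup{Con}(\ATR_{0})$. But the essential computability-theoretic content---that $\Theta$ plus $\mu^{2}$ is strong enough and $\Lambda$ plus $\mu^{2}$ is not---is cited wholesale from the reference, which is also all the paper itself does. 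In effect you have reconstructed the negative half but misstated the logical architecture of the positive half.
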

\begin{proof}
See \cite{dagsam}*{\S6}.
\end{proof}
The previous result is referred to as a `phase transition' in \cite{dagsam} as there is (currently) nothing in between $\WKL_{0}$ and $\WWKL_{0}$ in the RM zoo. 
\begin{cor}
The system $\P+\paai$ does not prove $\STP$.  
\end{cor}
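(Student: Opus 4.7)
The plan is to obtain the corollary as an essentially immediate consequence of the preceding theorem via G\"odel's second incompleteness theorem. Suppose for contradiction that $\P+\paai \vdash \STP$. Then combining with the preceding theorem, $\P+\paai \vdash \textup{Con}(\ATR_{0})$, and it suffices to rule this out by bounding the consistency strength of $\P+\paai$.

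First, I would argue that every internal theorem of $\P+\paai$ is already provable in a system at the level of $\ACA_{0}$. Since $\textup{Con}(\ATR_{0})$ is an internal (in fact $\Pi^{0}_{1}$) arithmetical sentence, it is in particular a trivial normal form $(\forall^{\st}\underline x)(\exists^{\st}\underline y)\varphi(\underline x,\underline y)$ with $\varphi$ internal. Note too that $\paai$ has the normal form \eqref{frux}. Hence a derivation $\P+\paai \vdash \textup{Con}(\ATR_{0})$ meets the hypotheses of Theorem~\ref{consresult} (with $\Delta_{\intern}$ a witnessed version of \eqref{frux}), so we may extract closed terms realising \eqref{frux} in \textsf{E-PA}$^{\omega*}$. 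But these very witnesses define Feferman's non-constructive mu-operator $(\mu^{2})$, so the extraction yields a proof of $\textup{Con}(\ATR_{0})$ in \textsf{E-PA}$^{\omega*}+(\mu^{2})$. By the standard conservation result (see \cite{kooltje}*{\S3} and the discussion around $(\mu^{2})$ and $(\exists^{2})$ in the excerpt), this system is conservative over $\ACA_{0}$ for arithmetical sentences.

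Second, I would invoke G\"odel's second incompleteness theorem: $\ACA_{0}$, being strictly weaker than $\ATR_{0}$ and a subtheory of it, cannot prove $\textup{Con}(\ATR_{0})$. Together with the previous paragraph, this contradicts the assumption that $\P+\paai \vdash \STP$, completing the proof.

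I expect the main (and really only) obstacle to be the careful justification that the term extraction machinery indeed delivers an internal proof of $\textup{Con}(\ATR_{0})$ in a subsystem of $\ACA_{0}$'s strength, rather than merely in the full internal part of $\P$. The delicate point is to treat the hypothesis $\paai$ via its normal form \eqref{frux} and to feed the extracted term (which is Feferman's $\mu$) back into the ambient internal system; once this bookkeeping is in place, the remainder is a routine appeal to G\"odel's theorem.
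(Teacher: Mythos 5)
Your argument is essentially the paper's: assume $\P+\paai\vdash\STP$, invoke the preceding theorem to obtain $\P+\paai\vdash\textup{Con}(\ATR_{0})$, apply term extraction (Theorem~\ref{consresult}) to get $\textsf{E-PA}^{\omega*}+(\mu^{2})\vdash\textup{Con}(\ATR_{0})$, and conclude by G\"odel's second incompleteness theorem. The one slip is the claim that $\textsf{E-PA}^{\omega*}+(\mu^{2})$ is conservative over $\ACA_{0}$ for arithmetical sentences: $\textsf{E-PA}^{\omega*}$ has full induction in all finite types and hence already proves $\textup{Con}(\PA)=\textup{Con}(\ACA_{0})$, so the Feferman--Kohlenbach conservation result you have in mind applies only to the fragment with $\Sigma^{0}_{1}$-induction. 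This does not affect the conclusion, since $\textsf{E-PA}^{\omega*}+(\mu^{2})$ still lies far below $\ATR_{0}$ in consistency strength, which is all the G\"odel step requires; the paper accordingly appeals to G\"odel directly without the $\ACA_{0}$ detour.
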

\begin{proof}
Suppose $\P+\paai$ proves $\STP$ and note that 
$\P+\paai$ then proves the consistency of $\ATR_{0}$ by the theorem.  By Theorem \ref{consresult}, $\textsf{E-PA}^{\omega*}+(\mu^{2})$ also proves the consistency of $\ATR_{0}$, which contradicts G\"odel's incompleteness theorems.
\end{proof}
At the end of Section \ref{FUWKL}, we noted that Theorem \ref{proto} should go through for $\WKL$ instead of $\WWKL$. 
In particular, the restriction to trees of positive measure (which is part of $\WWKL$) can be lifted while still obtaining the same equivalence as in \eqref{frood8}.   
Hence, these results are \emph{robust}, i.e.\ equivalent to small perturbations of themselves (See \cite{montahue}*{p.\ 432}).  We now provide an example where the notion of `tree with positive measure' yields \emph{non-robust} results.  
To this end, consider the following strengthening of $\LMP$:
\be\tag{$\LMP^{+}$}
(\forall T \leq_{1}1)\big[ \mu(T)>_{\R}0\di (\exists^{\st} \beta\leq_{1}1)(\forall^{\st} m)(\overline{\beta}m\in T) \big],
\ee
and the following weakening of $\STP$:
\be\tag{$\STP^{-}$}
(\forall T \leq_{1}1)\big[(\forall n^{0})(\exists \beta^{0^{*}})(\beta\in T\wedge |\beta|=n)\di (\exists^{\st} \beta\leq_{1}1)(\forall^{\st} m)(\overline{\beta}m\in T) \big],
\ee
\begin{thm}\label{komo}
In $\P_{0}+\WWKL$, we have $\STP\asa \LMP^{+}\asa \STP^{-}$.
\end{thm}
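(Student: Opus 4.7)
The plan is to establish the three-way equivalence by proving the four implications $\STP \Rightarrow \LMP^{+}$, $\STP \Rightarrow \STP^{-}$, $\LMP^{+} \Rightarrow \STP$, and $\STP^{-} \Rightarrow \STP$. The two forward implications out of $\STP$ are routine given the normal form \eqref{fanns} of $\STP$ obtained in Theorem~\ref{lapdog}, while the two reverse implications back to $\STP$ are where the substance lies; I would handle both reversals uniformly via a single tree construction.

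For $\STP \Rightarrow \LMP^{+}$, I would argue as follows. Given $T \leq_{1} 1$ with $\mu(T) >_{\R} 0$, the internal witness $k$ to the measure condition guarantees that $|\{\sigma \in T : |\sigma| = n\}| \geq 1$ at every length $n$, since the ratio sequence is decreasing in $n$ and bounded below by $1/k$. In particular $T$ contains elements of every standard length, so the antecedent of \eqref{fanns} is met and a standard $\beta \leq_{1} 1$ with $(\forall^{\st} m)(\overline{\beta}m \in T)$ is produced. The implication $\STP \Rightarrow \STP^{-}$ is even more immediate: the hypothesis of $\STP^{-}$---infinitude of $T$ at \emph{every} length---trivially entails the weaker hypothesis of \eqref{fanns}, namely infinitude at every \emph{standard} length.

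For the reversals $\LMP^{+} \Rightarrow \STP$ and $\STP^{-} \Rightarrow \STP$, fix an arbitrary $\alpha \leq_{1} 1$. By idealisation \textsf{I}, pick a nonstandard natural number $N^{0}$, and consider the internal binary tree
\[
T := \big\{ \sigma \in 2^{<\N} : (\forall i < \min(|\sigma|, N))(\sigma(i) = \alpha(i))\big\}.
\]
A direct count shows $|\{\sigma \in T : |\sigma| = n\}|$ is $1$ for $n \leq N$ and $2^{n-N}$ for $n \geq N$, so $T$ is internally infinite and $\mu(T) = 1/2^{N}$, with the internal assertion $\mu(T) >_{\R} 0$ witnessed by $k := 2^{N}$. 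Applying either $\LMP^{+}$ (via the measure condition) or $\STP^{-}$ (via infinitude) now yields a standard $\beta \leq_{1} 1$ with $(\forall^{\st} m)(\overline{\beta}m \in T)$. For each standard $m$ one has $m < N$ by nonstandardness of $N$, so $\overline{\beta}m \in T$ forces $\beta(i) = \alpha(i)$ for all $i < m$, and hence $\beta \approx_{1} \alpha$, which is $\STP$.

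The main conceptual point, and what makes this collapse worth flagging, is that the measure hypothesis of $\LMP^{+}$ is a purely \emph{internal} statement, which is exactly what permits the above tree of infinitesimal measure $1/2^{N}$ to qualify. This is the feature that $\LMP^{+}$ possesses and its restriction $\LMP$ lacks (since $\nu(T) \gg 0$ in $\LMP$ demands a standardly positive lower bound, failing for this $T$), consistent with the non-robustness commentary preceding the statement. I expect $\WWKL$ in the base theory $\P_{0}+\WWKL$ to play only an auxiliary role, underwriting the routine measure-theoretic manipulations and the comparison of $\nu$ and $\mu$ inside $\P_{0}$, rather than entering essentially into the main argument sketched above.
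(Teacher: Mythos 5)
Your proof is correct, and the three-way equivalence is established soundly, but the route differs from the paper's in two of the four implications.

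For $\STP \Rightarrow \LMP^{+}$, the paper applies $\WWKL$ (from the base theory) to produce an internal path $\gamma$ through $T$, then applies the original form of $\STP$ to $\gamma$ to obtain a standard $\beta \approx_{1} \gamma$, and concludes $(\forall^{\st}m)(\overline{\beta}m\in T)$. You instead route through the normal form \eqref{fanns}: you observe that $\nu(T)>_{\R}0$ already forces $T$ to have nodes at every (standard) length, so \eqref{fanns} applies directly to $T$ itself. This is a genuine simplification -- it shows that $\WWKL$ in the base theory is dispensable for this direction, at least once one is willing to replace $\STP$ by its normal form \eqref{fanns} (which is the move Theorem~\ref{lapdog} makes available). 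For $\STP^{-}\Rightarrow\STP$, the paper argues by applying overspill to the antecedent of \eqref{fanns} to obtain a sequence $\beta_{0}$ of nonstandard length inside the given tree, then adjoins the single branch $\beta_{0}*00\dots$ and applies $\STP^{-}$ to this auxiliary tree. You instead reuse the exact same tree construction you employ for $\LMP^{+}\Rightarrow\STP$, establishing $\STP$ in its original form $(\forall\alpha)(\exists^{\st}\beta)(\alpha\approx_{1}\beta)$ directly rather than passing back through \eqref{fanns}. For $\LMP^{+}\Rightarrow\STP$, your construction (follow $\alpha$ for the first $N$ levels, then branch fully, with measure $1/2^{N}$ witnessed by $k:=2^{N}$) is essentially identical to the paper's. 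Your closing remark on why $\LMP^{+}$ but not $\LMP$ admits this trick -- that $\nu(T)>_{\R}0$ is internal and so tolerates an infinitesimal lower bound $1/2^{N}$, whereas $\nu(T)\gg 0$ would exclude it -- is exactly the point the paper is making with this theorem, and is worth retaining. The net effect is that your proof is more unified (one tree serves both reversals) and economizes on $\WWKL$ and overspill, whereas the paper's proof more directly exhibits the role of $\WWKL$ and the paths it provides.
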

\begin{proof}
The implication $\STP\di \STP^{-}$ is immediate; for the reverse implication, apply overspill to the antecedent of \eqref{fanns} to obtain a sequence $\beta_{0}^{0^{*}}$ of nonstandard length in $T$.  Extend the latter to an infinite tree by including $\beta_{0}*00\dots$, which is nonstandard.   
Applying $\STP^{-}$ to this extended tree yields the consequence of \eqref{fanns}, and hence $\STP$.
For $\STP\di \LMP^{+}$, apply $\STP$ to the path claimed to exist by $\WWKL$ and note that we obtain $\LMP^{+}$.  
For $\LMP^{+}\di\STP$, fix $f^{1}\leq_{1}1$ and nonstandard $N$.  Define the tree $T\leq_{1}1$ which is $f$ until height $N$, followed by the full binary tree.  
Then $\mu(T)>_{\R}0$ and let standard $g^{1}\leq_{1}1$ be such that $(\forall^{\st}n)(\overline{g}n\in T)$.   By definition, we have $f\approx_{1}g$, and we are done.
\end{proof}
In light of the theorem, $\STP$ and $\Theta$ seem fairly robust, while $\LMP$ and $\Lambda$ are not.

\medskip

Finally, $\STP$ and $\LMP$ are not unique: Similar nonstandard (and functional) versions exist for most of the theorems in the RM zoo.  Indeed, since every theorem $T$ in the RM zoo follows from arithmetical comprehension, 
we can prove $T^{\st}$ in $\RCAO+\paai+\STP$, and use $\STP$ to drop the `st' in the leading quantifier as in \eqref{fanns} for $\WKL^{\st}$.  
The author and Dag Normann are currently investigating the exact power of $\Theta$ and $\Lambda$ and the strength of the associated nonstandard axioms.  

\begin{bibdiv}
\begin{biblist}
\bib{avi2}{article}{
  author={Avigad, Jeremy},
  author={Feferman, Solomon},
  title={G\"odel's functional \(``Dialectica''\) interpretation},
  conference={ title={Handbook of proof theory}, },
  book={ publisher={North-Holland}, },
  date={1998},
  pages={337--405},
}

\bib{beeson1}{book}{
  author={Beeson, Michael J.},
  title={Foundations of constructive mathematics},
  series={Ergebnisse der Mathematik und ihrer Grenzgebiete},
  volume={6},
  note={Metamathematical studies},
  publisher={Springer},
  date={1985},
  pages={xxiii+466},
}

\bib{brie}{article}{
  author={van den Berg, Benno},
  author={Briseid, Eyvind},
  author={Safarik, Pavol},
  title={A functional interpretation for nonstandard arithmetic},
  journal={Ann. Pure Appl. Logic},
  volume={163},
  date={2012},
  pages={1962--1994},
}

\bib{bergske}{article}{
  author={Berger, Josef},
  author={Ishihara, Hajime},
  author={Takayuki, Kihara},
  author={Nemoto, Takako},
  title={The binary expansion and the intermediate value theorem in constructive reverse mathematics},
  journal={Available from \url {http://www.jaist.ac.jp/~t-nemoto/beivt.pdf}},
}

\bib{bishl}{book}{
  author={Bishop, Errett},
  title={Aspects of constructivism},
  publisher={Notes on the lectures delivered at the Tenth Holiday Mathematics Symposium},
  place={New Mexico State University, Las Cruces, December 27-31},
  date={1972},
  pages={pp.\ 37},
}

\bib{kuddd}{article}{
  author={Bishop, Errett},
  title={Review of \cite {keisler3}},
  year={1977},
  journal={Bull. Amer. Math. Soc},
  volume={81},
  number={2},
  pages={205-208},
}

\bib{kluut}{article}{
  author={Bishop, Errett},
  title={The crisis in contemporary mathematics},
  booktitle={Proceedings of the American Academy Workshop on the Evolution of Modern Mathematics},
  journal={Historia Math.},
  volume={2},
  date={1975},
  number={4},
  pages={507--517},
}

\bib{brich}{book}{
  author={Bridges, Douglas},
  author={Richman, Fred},
  title={Varieties of constructive mathematics},
  series={London Mathematical Society Lecture Note Series},
  volume={97},
  publisher={Cambridge University Press},
  place={Cambridge},
  date={1987},
  pages={x+149},
}

\bib{bridges1}{book}{
  author={Bridges, Douglas S.},
  author={V{\^{\i }}{\c {t}}{\u {a}}, Lumini{\c {t}}a Simona},
  title={Techniques of constructive analysis},
  series={Universitext},
  publisher={Springer},
  place={New York},
  date={2006},
  pages={xvi+213},
}

\bib{conman}{article}{
  author={Connes, Alain},
  title={An interview with Alain Connes, Part I},
  year={2007},
  journal={EMS Newsletter},
  note={\url {http://www.mathematics-in-europe.eu/maths-as-a-profession/interviews}},
  volume={63},
  pages={25-30},
}

\bib{conman2}{article}{
  author={Connes, Alain},
  title={Noncommutative geometry and reality},
  journal={J. Math. Phys.},
  volume={36},
  date={1995},
  number={11},
  pages={6194--6231},
}

\bib{damirzoo}{misc}{
  author={Dzhafarov, Damir D.},
  title={Reverse Mathematics Zoo},
  note={\url {http://rmzoo.uconn.edu/}},
}

\bib{godel3}{article}{
  author={G{\"o}del, Kurt},
  title={\"Uber eine bisher noch nicht ben\"utzte Erweiterung des finiten Standpunktes},
  language={German, with English summary},
  journal={Dialectica},
  volume={12},
  date={1958},
  pages={280--287},
}

\bib{grilling}{article}{
  author={Grilliot, Thomas J.},
  title={On effectively discontinuous type-$2$ objects},
  journal={J. Sym. Logic},
  volume={36},
  date={1971},
}

\bib{benno2}{article}{
  author={Hadzihasanovic, Amar},
  author={van den Berg, Benno},
  title={Nonstandard functional interpretations and models},
  journal={To appear in Notre Dame Journal for Formal Logic},
  volume={},
  date={2016},
  number={},
  pages={},
}

\bib{polahirst}{article}{
  author={Hirst, Jeffry L.},
  title={Representations of reals in reverse mathematics},
  journal={Bull. Pol. Acad. Sci. Math.},
  volume={55},
  date={2007},
  number={4},
  pages={303--316},
}

\bib{loeb1}{book}{
  author={Hurd, Albert E.},
  author={Loeb, Peter A.},
  title={An introduction to nonstandard real analysis},
  series={Pure and Applied Mathematics},
  volume={118},
  publisher={Academic Press Inc.},
  place={Orlando, FL},
  date={1985},
  pages={xii+232},
}

\bib{keisler3}{book}{
  author={Keisler, H. Jerome},
  title={Elementary Calculus},
  publisher={Prindle, Weber and Schmidt},
  date={1976},
  pages={xviii + 880 + 61 (appendix)},
  place={Boston},
}

\bib{kohlenbach3}{book}{
  author={Kohlenbach, Ulrich},
  title={Applied proof theory: proof interpretations and their use in mathematics},
  series={Springer Monographs in Mathematics},
  publisher={Springer-Verlag},
  place={Berlin},
  date={2008},
  pages={xx+532},
}

\bib{kohlenbach2}{article}{
  author={Kohlenbach, Ulrich},
  title={Higher order reverse mathematics},
  conference={ title={Reverse mathematics 2001}, },
  book={ series={Lect. Notes Log.}, volume={21}, publisher={ASL}, },
  date={2005},
  pages={281--295},
}

\bib{kohlenbach4}{article}{
  author={Kohlenbach, Ulrich},
  title={Foundational and mathematical uses of higher types},
  conference={ title={Reflections on the foundations of mathematics (Stanford, CA, 1998)}, },
  book={ series={Lect. Notes Log.}, volume={15}, publisher={ASL}, },
  date={2002},
  pages={92--116},
}

\bib{kooltje}{article}{
  author={Kohlenbach, Ulrich},
  title={On uniform weak K\"onig's lemma},
  journal={Ann. Pure Appl. Logic},
  volume={114},
  date={2002},
  pages={103--116},
}

\bib{longmann}{book}{
  author={Longley, John},
  author={Normann, Dag},
  title={Higher-order Computability},
  year={2015},
  publisher={Springer},
  series={Theory and Applications of Computability},
}

\bib{mandje2}{article}{
  author={Mandelkern, Mark},
  title={Brouwerian counterexamples},
  journal={Math. Mag.},
  volume={62},
  date={1989},
  number={1},
  pages={3--27},
}

\bib{montahue}{article}{
  author={Montalb{\'a}n, Antonio},
  title={Open questions in reverse mathematics},
  journal={Bull. Symbolic Logic},
  volume={17},
  date={2011},
  number={3},
  pages={431--454},
}

\bib{wownelly}{article}{
  author={Nelson, Edward},
  title={Internal set theory: a new approach to nonstandard analysis},
  journal={Bull. Amer. Math. Soc.},
  volume={83},
  date={1977},
  number={6},
  pages={1165--1198},
}

\bib{noortje}{book}{
  author={Normann, Dag},
  title={Recursion on the countable functionals},
  series={LNM 811},
  volume={811},
  publisher={Springer},
  date={1980},
  pages={viii+191},
}

\bib{dagsam}{article}{
  author={Normann, Dag},
  author={Sanders, Sam},
  title={Nonstandard Analysis, Computability Theory, and their connections},
  journal={Submitted, Available from \url {https://arxiv.org/abs/1702.06556}},
  date={2017},
}

\bib{yamayamaharehare}{article}{
  author={Sakamoto, Nobuyuki},
  author={Yamazaki, Takeshi},
  title={Uniform versions of some axioms of second order arithmetic},
  journal={MLQ Math. Log. Q.},
  volume={50},
  date={2004},
  number={6},
  pages={587--593},
}

\bib{sayo}{article}{
  author={Sanders, Sam},
  author={Yokoyama, Keita},
  title={The {D}irac delta function in two settings of {R}everse {M}athematics},
  year={2012},
  journal={Archive for Mathematical Logic},
  volume={51},
  number={1},
  pages={99-121},
}

\bib{samGH}{article}{
  author={Sanders, Sam},
  title={The Gandy-Hyland functional and a hitherto unknown computational aspect of Nonstandard Analysis},
  year={2017},
  journal={To appear in \emph {Computability}, \url {http://arxiv.org/abs/1502.03622}},
}

\bib{samzoo}{article}{
  author={Sanders, Sam},
  title={The taming of the Reverse Mathematics zoo},
  year={2015},
  journal={Submitted, \url {http://arxiv.org/abs/1412.2022}},
}

\bib{samzooII}{article}{
  author={Sanders, Sam},
  title={The refining of the taming of the Reverse Mathematics zoo},
  year={2016},
  journal={To appear in Notre Dame Journal for Formal Logic, \url {http://arxiv.org/abs/1602.02270}},
}

\bib{sambon}{article}{
  author={Sanders, Sam},
  title={The unreasonable effectiveness of Nonstandard Analysis},
  year={2015},
  journal={Submitted, \url {http://arxiv.org/abs/1508.07434}},
}

\bib{SB}{article}{
  author={Sanders, Sam},
  title={To be or not to be constructive},
  journal={\emph {Indagationes Mathematicae} and the Brouwer volume \emph {L.E.J. Brouwer, fifty years later}, \href{https://doi.org/10.1016/j.indag.2017.05.005}{ScienceDirect}},
  date={2017 and 2018},
  pages={pp.\ 68},
}

\bib{samnewarix}{article}{
  author={Sanders, Sam},
  title={The computational content of the Loeb measure},
  year={2016},
  journal={Available from arXiv: \url {https://arxiv.org/abs/1609.01945}},
}

\bib{samsynt}{article}{
  author={Sanders, Sam},
  title={Reverse Formalism 16},
  year={2017},
  journal={Synthese, \href{https://link.springer.com/article/10.1007\%2Fs11229-017-1322-2}{SpringerLink}},
}

\bib{samGT2}{article}{
  author={Sanders, Sam},
  title={The computational content of Nonstandard Analysis},
  date={2016},
  journal={Electronic Proceedings in Computer Science 213, \emph {Classic Logic and Computation}, Porto (CL\&C2016)},
  pages={21-40},
}

\bib{simpson1}{collection}{
  title={Reverse mathematics 2001},
  series={Lecture Notes in Logic},
  volume={21},
  editor={Simpson, Stephen G.},
  publisher={ASL},
  place={La Jolla, CA},
  date={2005},
  pages={x+401},
}

\bib{simpson2}{book}{
  author={Simpson, Stephen G.},
  title={Subsystems of second order arithmetic},
  series={Perspectives in Logic},
  publisher={CUP},
  date={2009},
  pages={xvi+444},
}

\bib{pimpson}{article}{
  author={Simpson, Stephen G.},
  author={Yokoyama, Keita},
  title={A nonstandard counterpart of \textsf {\textup {WWKL}}},
  journal={Notre Dame J. Form. Log.},
  volume={52},
  date={2011},
  number={3},
  pages={229--243},
}

\bib{troeleke1}{book}{
  author={Troelstra, Anne Sjerp},
  author={van Dalen, Dirk},
  title={Constructivism in mathematics. Vol. I},
  series={Studies in Logic and the Foundations of Mathematics},
  volume={121},
  publisher={North-Holland},
  date={1988},
  pages={xx+342+XIV},
}

\end{biblist}
\end{bibdiv}

\bye